\newtheorem{definition}{Definition}
\newtheorem{theorem}{Theorem}
\newtheorem{lemma}{Lemma}
\begin{document}

\title{Data-Driven Stochastic Models and Policies for Energy Harvesting Sensor Communications}

\author{\IEEEauthorblockN{Meng-Lin Ku, Yan Chen and K. J. Ray Liu} \thanks{M. L. Ku is affiliated with Department of Communication Engineering, National Central University, Taiwan; Y. Chen and K. J. R. Liu are affiliated with Department
of Electrical and Computer Engineering, University of Maryland, College Park, USA. }}

\maketitle

\begin{abstract}
Energy harvesting from the surroundings is a promising solution to perpetually power-up wireless sensor communications. This paper presents a data-driven approach of finding optimal transmission policies for a solar-powered sensor node that attempts to maximize net bit rates by adapting its transmission parameters, power levels and modulation types, to the changes of channel fading and battery recharge. We formulate this problem as a discounted Markov decision process (MDP) framework, whereby the energy harvesting process is stochastically quantized into several representative solar states with distinct energy arrivals and is totally driven by historical data records at a sensor node. With the observed solar irradiance at each time epoch, a mixed strategy is developed to compute the belief information of the underlying solar states for the choice of transmission parameters. In addition, a theoretical analysis is conducted for a simple on-off policy, in which a predetermined transmission parameter is utilized whenever a sensor node is active. We prove that such an optimal policy has a threshold structure with respect to battery states and evaluate the performance of an energy harvesting node by analyzing the expected net bit rate. The design framework is exemplified with real solar data records, and the results are useful in characterizing the interplay that occurs between energy harvesting and expenditure under various system configurations. Computer simulations show that the proposed policies significantly outperform other schemes with or without the knowledge of short-term energy harvesting and channel fading patterns.

\end{abstract}

\section{Introduction}
In traditional wireless sensor networks, sensor nodes are often powered by non-rechargeable batteries and distributed over a large area for data aggregation. But a major limitation of these untethered sensors is that the network lifetime is often dominated by finite battery capacity. Since the battery charge depletes with time, periodic battery or node replacement is required for prolonging the sensor node operations, though it becomes infeasible, costly and even impossible in some environments such as a large-scale network. As a result, there has been much research on designing efficient transmission mechanisms/protocols for saving energy in sensor communications \cite{C.Pandana05}.

Recently, energy harvesting has become an attractive alternative to circumvent this energy exhaustion problem by scavenging ambient energy sources (e.g., solar, wind, vibration, etc.) to replenish the sensors' power supply \cite{S.Sudevalayam11}. Though an inexhaustible energy supply from the environments enables wireless sensor nodes to function for a potentially infinite lifetime, management of the harvested energy remains a crucial issue due to the uncertainty of battery replenishment. In fact, most ambient sources occur randomly and sporadically in nature, and different sources exhibit different energy renewal processes in terms of predictability, controllability, and magnitude, requiring various design considerations for energy management.

In this paper, we focus on solar-powered wireless sensor networks, where each node is equipped with an energy harvesting device and a solar panel to collect surplus energy through the photovoltaic effect. Since the solar energy is uncontrollable and the rate of energy generation is typically small, the energy is temporarily stored and accumulated up to a certain amount in the rechargeable battery, which has limited storage capacity, for future data transmissions. But in practice, the amount of energy quanta available to a sensor could fluctuate dramatically even within a short period, and the level depends on many factors, such as the time of the day, the current weather, the seasonal weather patterns, the physical conditions of the environments around sensors, and the timescale (from seconds to days) of the energy management, to name but a few. This makes the prediction of energy harvesting conditions for the future intervals very challenging, even though the solar irradiance is partially predictable with the aid of daily irradiance patterns \cite{A.Kansal07}. Hence, there is a need for a stochastic energy harvesting model specific to each node, which is capable of capturing the dynamics of the solar energy associated with real data records. Besides, overly aggressive or conservative use of the harvested energy may either run out of the energy in the battery or fail to utilize the excess energy, resulting in the so-called energy outage or energy overflow problems, respectively. Consequently, another essential challenge lies in adaptively tuning the transmission parameters of sensor nodes in a smooth way that considers the randomness of energy generation and channel variation, avoids early energy depletion before the next management cycle, and maximizes certain utility functions through a finite or infinite horizon of epochs.

A wide variety of energy generation models have been adopted in the literature to study the performance of solar-powered sensor networks. In general, energy harvesting models can be categorized into two classes: deterministic models \cite{M.Tacca07}, \cite{S.Reddy10} and stochastic models \cite{A.Kansal07}, \cite{D.Niyato07}--\nocite{B.Medepally09}\nocite{N.Michelusi12}\nocite{N.Michelusi131}\nocite{A.Aprem13}\nocite{K.J.Prabuchandran13}\nocite{J.Lei09}\nocite{S.Mao12}\nocite{M.Kashef12}\nocite{Z.Wang12}\nocite{H.Li10}\cite{N.Michelusi13}. Deterministic models, which assume that energy arrival instants and amounts are known in advance by the transmitter, were applied in \cite{M.Tacca07} and \cite{S.Reddy10} for designing transmission schemes. The success of the energy management in this category rests on an assumption of accurate energy harvesting prediction over a somewhat long time horizon, whereas modeling mismatch occurs when the prediction interval is enlarged. Recently attention has shifted to stochastic models by accommodating the design of energy management to the randomness of energy renewal processes. The authors of \cite{D.Niyato07} developed an analytic solar radiation process with the parameters of cloud size and wind speed. By assuming that energy harvested in each time slot is identically and independently distributed, the energy generation process has been described via Bernoulli models with a fixed harvesting rate \cite{B.Medepally09}--\nocite{N.Michelusi12}\nocite{N.Michelusi131}\cite{A.Aprem13}. Other commonly used stochastic models that are uncorrelated across time include the uniform process \cite{A.Kansal07}, Poisson process \cite{K.J.Prabuchandran13}, and exponential process \cite{J.Lei09}. In some previous works, energy from ambient sources was modeled by a two-state Markov model to mimic the time-correlated harvesting behavior of a sensor node with time-slotted operation \cite{S.Mao12}--\nocite{M.Kashef12}\nocite{Z.Wang12}\nocite{H.Li10}\cite{N.Michelusi13}. However, there has been little research to validate the assumptions, along with exact physical interpretation, of the aforementioned stochastic models. It is essential to incorporate a data-driven stochastic model, which is capable of linking its underlying parameters to the dynamics of empirical energy harvesting data, into the design of sensor communications in order to develop more realistic performance characteristics.

Resource management for energy harvesting communications has been reported in the literature to optimize the system utility and to harmonize the energy consumption with the battery recharge rate. The optimization of energy usage is subject to a neutral constraint which stipulates that at each time instant, the energy expenditure cannot surpass the total amount of energy harvested so far. The utilities considered in previous works include the data throughput in \cite{S.Mao12}, \cite{M.Kashef12}, \cite{S.Yin13}--\nocite{P.S.Khairnar11}\nocite{O.Ozel11}\cite{N.Roseveare14}, the data queuing delay in \cite{T.Zhang13}, the packet error rate in \cite{A.Aprem13}, \cite{A.Seyedi10}, the transmission outage probability in \cite{D.Niyato07}, \cite{B.Medepally09}, \cite{S.Zhang13}, and the importance value of data packets in \cite{N.Michelusi131}, \cite{N.Michelusi13}. With deterministic energy and channel profiles, a utility maximization framework was investigated in \cite{M.Gorlatova13} to achieve smooth energy spending for a node. Applying a save-then-transmit protocol, the authors of \cite{S.Yin13} investigated a save-time ratio selection problem to maximize data throughput. In \cite{P.S.Khairnar11}, power and rate adaption were jointly designed for maximizing data throughput via an upper bound that is solely subject to an average power constraint. Directional water-filling was proposed in \cite{O.Ozel11} for throughput maximization. With stochastic models, the authors of \cite{J.Lei09} designed a threshold to decide whether to transmit or drop a message based on its importance. The outage probabilities of an energy harvesting node were analyzed in fading channels by taking into account both the energy harvesting and event arrival processes \cite{B.Medepally09}, \cite{S.Zhang13}. A simple power control policy was developed in \cite{Q.Wang13} to attain near optimal throughput in a finite-horizon case.

More recently, Markov decision processes (MDP) have been utilized to deal with the resource management problems for energy harvesting systems \cite{D.Niyato07}, \cite{A.Aprem13}, \nocite{K.J.Prabuchandran13}\cite{M.Kashef12}--\nocite{Z.Wang12}\cite{H.Li10}, \cite{T.Zhang13}, \cite{A.Seyedi10}\nocite{M.Gorlatova13}. When the battery replenishment, the wireless channel, and the packet arrival are regarded as Markov processes, sleep and wake-up strategies were developed in solar-powered sensor networks \cite{D.Niyato07}. Similar investigations were carried out with different reward functions, e.g., buffer delay \cite{M.Kashef12}, \cite{T.Zhang13}. In \cite{Z.Wang12} and \cite{A.Seyedi10}, transmission strategies for applications in body sensor networks and active networked tags were solved by casting them as MDP problems. In the presence of partial state information at transmitters, the problems of transmission scheduling and power control were addressed in \cite{A.Aprem13} and \cite{H.Li10}, respectively, by means of partially observable MDP. However, the aforementioned works all prearranged stochastic energy generation models for the development of transmission mechanisms without concern for the reality of the assumptions underlying the considered models. Further, none of these works linked the solar irradiance data, gathered by an energy harvesting node, to the constructions of the design frameworks and the optimal transmission policies.

In this paper, we present data-driven transmission policies for an energy harvesting source node that aims to transmit information packets to its sink over a wireless fading channel. For this we maximize the long-term bit rates of the communication link by adapting transmission power levels and modulation schemes to the source's knowledge of its current battery and channel status. We employ a Gaussian mixture hidden Markov model to quantify energy harvesting conditions into several representative solar states, whereby the underlying parameters enable us to effectively describe the statistical properties of the solar irradiance. To extract the underlying parameters, we use the learning ability of expectation-maximization (EM) algorithms to fit the hidden Markov model to the historical data record. Through the discretization, a stochastic discrete model that describes the generation of energy quanta is developed and integrated into our design frameworks to capture the interaction between the underlying parameters and the system parameters. The fading channel between the source and the sink is represented by a finite-state Markov model. The adaptive transmission problem is then formulated as a discounted MDP and solved by a value iteration algorithm. Both the energy wastage and the throughput degradation caused by data packet retransmission are taken into consideration when the average reward rate is maximized. Since the exact solar state is unknown to the energy harvesting sensor, an observation-based mixed strategy is developed to compute the belief state information and to decide the corresponding transmission parameters, based on the present measurement of the solar irradiance. In addition, we present a theoretical study on a simple on-off transmission policy to obtain more insight into our design framework. That means packets are transmitted at constant power and modulation levels if the action is ``ON", while no transmission occurs if the action is ``OFF". In this special case, there exists a threshold structure in the direction along the battery states, and the long-term expected bit rate is increased with the amount of energy quanta in the battery. Throughout this paper, a real data record of the solar irradiance measured by a solar site in Elizabeth City State University \cite{N.R.E.Laboratory12} is utilized to exemplify our design framework as well as its performance evaluation. The performance of the proposed transmission policies is validated by extensive computer simulations and compared with other radical policies with or without the knowledge of short-range energy harvesting and channel variation patterns.

The rest of this paper is organized as follows. In Section II, we describe the stochastic energy harvesting model, the training of its underlying parameters, and its connection to the real data record. The MDP formulation of the adaptive transmission is presented in Section III, followed by the optimization of the policies and the mixed strategy in Section IV. Section V is devoted to the analysis of the threshold structure for a simple on-off transmission policy. Simulation results are presented in Section VI, and concluding remarks are provided in Section VII.

\section{Stochastic Energy Harvesting Models and Training}
The model for describing the harvested energy depends on various parameters, such as weather conditions (e.g., sunny, cloudy, rainy), sunshine duration (e.g, day and night), and behavior of the rechargeable battery (e.g., storage capacity). Besides, the solar energy usually evolves in a smooth fashion over a short time period. We focus on modeling the solar power from the measurements by using a hidden Markov chain, and establish a framework to extract the underlying parameters that can characterize the availability of solar power.

We begin with a toy example to justify the rationality of the proposed energy harvesting models. Consider a real data record of irradiance (i.e., the intensity of the solar radiation in units $\mu$W/cm$^2$) for the month of June from 2008 to 2010, measured by a solar site in Elizabeth City State University, with the measurements taken at five-minute intervals \cite{N.R.E.Laboratory12}. In Fig. \ref{Toy_example_time_series}, the time series of the irradiance is sketched over twenty-four hours for June 15$^{th}$, 2010, along with the average results for the month of June in 2008 and 2010. We can make the following observations from this figure. First, the daily solar radiation fluctuates slowly within a short time interval, but could suddenly change from the current level to adjacent levels with higher or lower mean values. Second, the average irradiance value is sufficiently high only from the early morning (seven o'clock) to the late afternoon (seventeen o'clock). We refer to this time duration as the sunlight active region. Third, the evolution of the diurnal irradiance follows a very similar time-symmetric mask, whereas the short-term profiles of different days can be very different and unpredictable. By considering the irradiance from seven o'clock to seventeen o'clock for June in 2008, 2009 and 2010, Fig. \ref{Histogram_irradiance} shows the corresponding histogram plotted against the irradiance on the x-axis, which represents the percentage of the occurrences of data samples in each bin of width $10^3$ $\mu$W/cm$^2$. It can be seen that the irradiance behaves like a mixture random variable generated by a number of normal distributions. In fact, the solar radiation incident on an energy harvesting device is affected by its surrounding obstacles (e.g., cloud and terrain), which yield absorption, reflection and scattering phenomena, and it can be intuitively presented as a Gaussian random variable by the law of large numbers. These observations motivate us to describe the evolution of the irradiance via a hidden Markov chain with a finite number of possible states, each of which is specified by a normal distribution with unknown mean and variance.

\begin{figure}[t]
  \centering
  \subfigure[]{
    \label{Toy_example_time_series} %% label for first subfigure
    \includegraphics[width=0.23\textwidth]{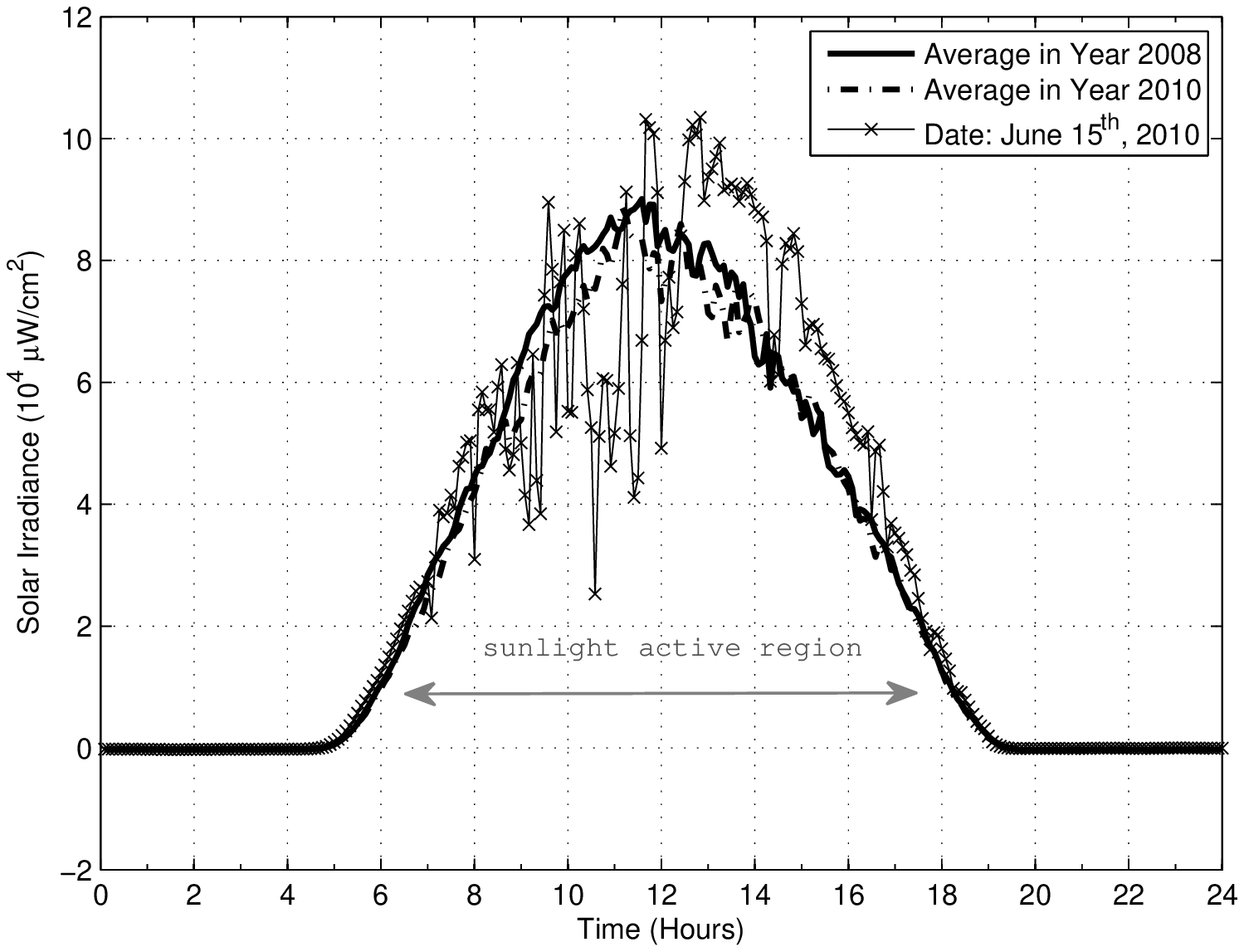}}
  \hspace{0.0in}
  \subfigure[]{
    \label{Histogram_irradiance} %% label for second subfigure
    \includegraphics[width=0.23\textwidth]{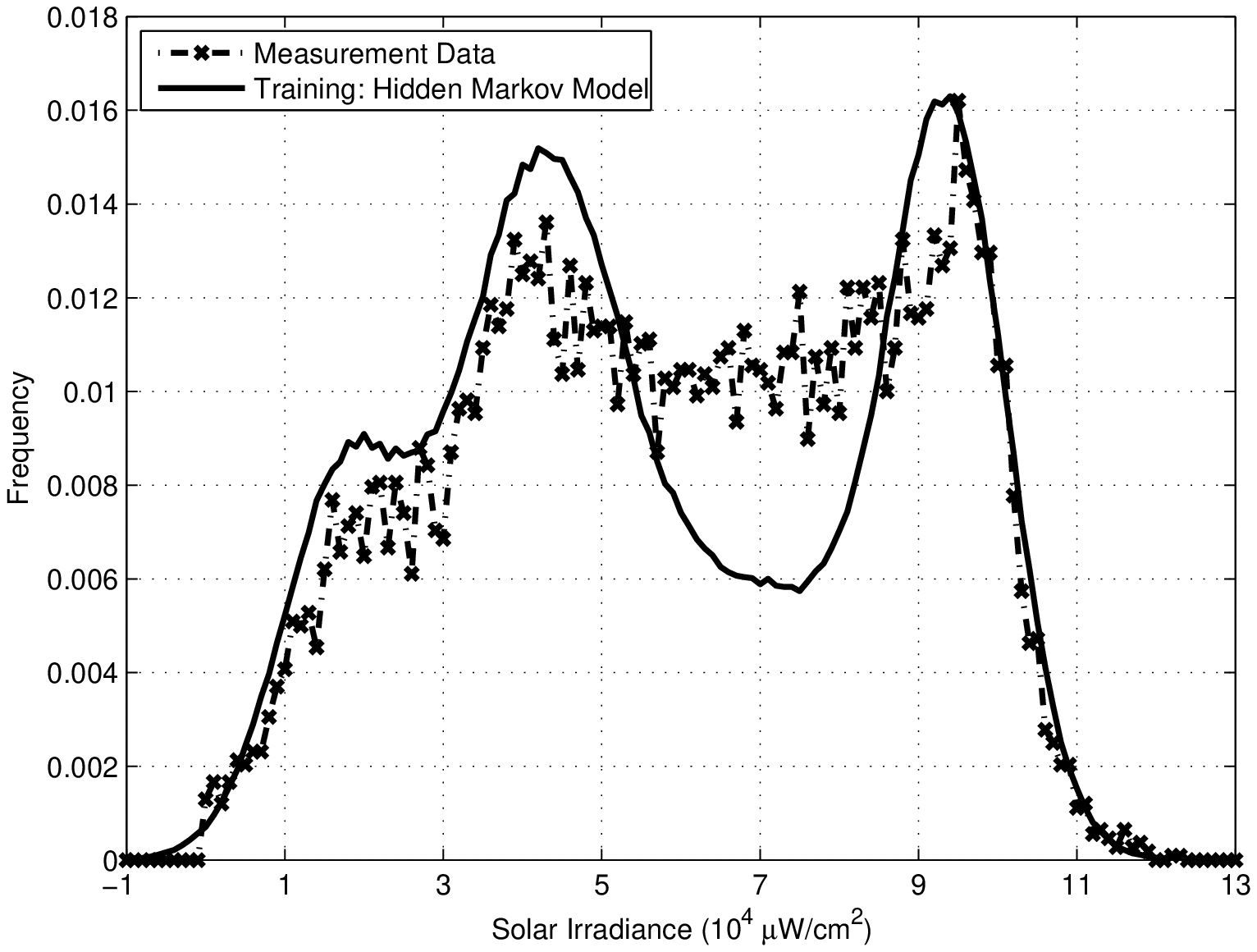}}
  \caption{Toy examples of solar irradiance measured by a solar site in Elizabeth City State University. (a) Time series of the daily irradiance in June. (b) Histogram of the irradiance during a time period of seven o'clock to seventeen o'clock for the month of June from 2008 to 2010.}
  \label{fig:subfig} %% label for entire figure
\end{figure}

\begin{figure}[h]
\centering
\includegraphics[width=0.4\textwidth]{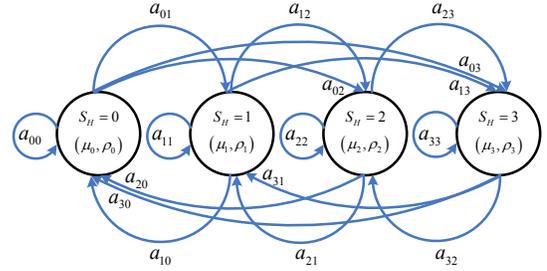}
\caption{Gaussian mixture hidden Markov chain of the solar power harvesting model with the underlying parameters $(\mu_j, \rho_j)$ ($N_H=4$).}
\label{Solar_power_state}
\end{figure}

An $N_H$-state solar power harvesting hidden Markov model is illustrated in Fig. \ref{Solar_power_state}, where the underlying normal distribution for the $j^{th}$ state is specified by the parameters of the mean $\mu_j$ and the variance $\rho_j$. The solar irradiance can be classified into several states $S_H$ to represent energy harvesting conditions such as `Excellent', `Good', `Fair', `Poor', etc. Without loss of generality, the solar states are numbered in ascending order of the mean values of the underlying parameters $\mu_j$. Let $S_H^{(t)}$ be the solar state at time instant $t$. We further assume that the hidden Markov model is time homogeneous and governed by the state transition probability $P\left( {\left. {S_H^{\left( t \right)}  = j} \right|S_H^{\left( {t - 1} \right)}  = i} \right) = a_{ij} $, for $i, j= 0,\ldots, N_H-1$. Let ${\bf x} = \left\{ {X^{\left( 1 \right)}  = x_1 , \ldots ,X^{\left( T \right)}  = x_T } \right\}$ be a sequence of observed data over a measurement period of $T$, corresponding to a sequence of hidden states ${\bf s} = \left\{ {S_H^{\left( 1 \right)}  = s_1 , \ldots ,S_H^{\left( T \right)}  = s_T } \right\}$. The parameters of the model are thus defined as ${\bf \mbox{\boldmath{$\Theta $}} } = \left\{ {{\mbox{\boldmath{$\mu $}} },{\mbox{\boldmath{$\rho $}} },{\bf a}} \right\}$, where ${\mbox{\boldmath{$\mu $}} } = \left[ {\mu_0 , \ldots ,\mu _{(N_H-1)} } \right]^T $, ${\mbox{\boldmath{$\rho $}}
 } = \left[ {\rho_0 , \ldots ,\rho _{(N_H-1)} } \right]^T $, and ${\bf a} = \left[ {a_{00} ,a_{01} , \ldots ,a_{(N_H-1)(N_H-1)} } \right]^T$. The probabilistic model can be trained by an EM algorithm, which is a general method of finding the maximum-likelihood (ML) estimate for the state parameters of underlying distributions from incomplete observed data, as follows \cite{J.A.Bilmes98}:
 \begin{align}
 \label{EM_formula}
    {\bf \Theta }^{\left( n \right)} &= \arg \mathop {\max }\limits_{\bf \Theta } \mathbb{E}_{\bf s} \left[ {\left. {\log P\left( {\left. {{\bf x},{\bf s}} \right|{\bf \Theta }} \right)} \right|{\bf x},{\bf \Theta }^{\left( {n - 1} \right)} } \right]\\
     &= \arg \mathop {\max }\limits_{\bf \Theta } \sum\nolimits_{\bf s} {\log P\left( {\left. {{\bf x},{\bf s}} \right|{\bf \Theta }} \right)}\cdot P\left( {\left. {{\bf x},{\bf s}} \right|{\bf \Theta }^{\left( {n - 1} \right)} } \right)\,, \nonumber
\end{align}
where ${\bf \Theta }^{\left( n \right)}$ is the estimation update at the $n^{th}$ iteration. For notational convenience, the following probabilities are defined:\begin{align}
\label{Alpha_term}
 \alpha _i^{\left( t \right)}  = P\left( {\left. {X^{\left( 1 \right)}  = x_1 , \ldots ,X^{\left( t \right)}  = x_t ,S_H^{\left( t \right)}  = i } \right|{\bf \Theta }^{\left( {n - 1} \right)} } \right)\, ;
\end{align}\begin{align}
\label{Beta_term}
 \beta _i^{\left( t \right)}  = P\left( {\left. {X^{\left( {t + 1} \right)}  = x_{t + 1} , \ldots ,X^{\left( T \right)}  = x_T } \right|S_H^{\left( t \right)}  = i ,{\bf \Theta }^{\left( {n - 1} \right)} } \right)\, .
\end{align}
From (\ref{Alpha_term}) and (\ref{Beta_term}), it further gives
\begin{align}
\label{Gamma_term}
 \gamma _i^{\left( t \right)}  = P\left( {\left. {S_H^{\left( t \right)}  = i } \right|{\bf x},{\bf \Theta }^{\left( {n - 1} \right)} } \right) = \frac{{\alpha _i^{\left( t \right)} \beta _i^{\left( t \right)} }}{{\sum\nolimits_{i' = 0}^{N_H-1} {\alpha _{i'}^{\left( t \right)} \beta _{i'}^{\left( t \right)} } }}\, ;
\end{align}
\begin{align}
\label{xi_term}
 \xi _{ij}^{\left( t \right)}  &= P\left( {\left. {{\bf x}, S_H^{\left( t \right)}  = i ,S_H^{\left( {t+1} \right)}  = j} \right|{\bf \Theta }^{\left( {n - 1} \right)} } \right) \\
 &= \frac{{\alpha _i^{\left( t \right)} a_{ij}^{\left( {n - 1} \right)} \beta _j^{\left( {t+1} \right)} f_j^{(n-1)} \left( {x_{t+1} } \right) }}{{\sum\nolimits_{i' = 0}^{N_H-1} {\sum\nolimits_{j' = 0}^{N_H-1} {\alpha _{i'}^{\left( t \right)} a_{i'j'}^{\left( {n - 1} \right)} \beta _{j'}^{\left( {t+1} \right)} f_{j'}^{(n-1)} \left( {x_{t+1} } \right) } } }}\, , \nonumber
\end{align}where $f_j^{(n-1)} \left( {x } \right) = \mathcal{N}\left( {x;\mu _j^{\left( {n - 1} \right)} ,\rho _j^{\left( {n - 1} \right)} } \right)$ represents the normal distribution, and the relationship among $\gamma _i^{\left( t \right)}$, $\alpha _i^{\left( t \right)}$, and $\beta _i^{\left( t \right)} $ in (\ref{Gamma_term}) is due to the conditional independence of the Markov chain: \begin{align}
\label{Conditional_indep_Markov}
 &P\left( {\left. {\bf x}, {S_H^{\left( t \right)}  = i } \right|{\bf \Theta }^{\left( {n - 1} \right)} } \right)  \\
 &=P\left( {\left. {X^{\left( 1 \right)}  = x_1 , \ldots ,X^{\left( t \right)}= x_t ,S_H^{\left( t \right)}  = i } \right|{\bf \Theta }^{\left( {n - 1} \right)} } \right) \nonumber\\
 &\;\;\;\cdot P\left( {\left. {X^{\left( {t + 1} \right)}  = x_{t + 1} , \ldots ,X^{\left( T \right)}  = x_T } \right|S_H^{\left( t \right)}  = i ,{\bf \Theta }^{\left( {n - 1} \right)} } \right)\, . \nonumber
\end{align}Solving the problem (\ref{EM_formula}) yields an iterative procedure for the estimation of the parameters:\begin{align}
\label{Estimation_transition_prob}
 a_{ij}^{\left( n \right)}  &= \frac{{\sum\nolimits_{t = 1}^{T-1} {P\left( {{\bf x},S_H^{\left( {t } \right)}  = i ,S_H^{\left( t+1 \right)}  = j \left| {{\bf \Theta }^{\left( {n - 1} \right)} } \right.} \right)} }}{{\sum\nolimits_{t = 1}^{T-1} {P\left( {{\bf x},S_H^{\left( {t } \right)}  = i \left| {{\bf \Theta }^{\left( {n - 1} \right)} } \right.} \right)} }} \nonumber\\
   &= \frac{{\sum\nolimits_{t = 1}^{T-1} {\xi _{ij}^{\left( t \right)} } }}{{\sum\nolimits_{j' = 0}^{N_H-1} {\sum\nolimits_{t = 1}^{T-1} {\xi _{i{j'}}^{\left( t \right)} } } }}\, ;
\end{align}\begin{align}
\label{Estimation_state_mean}
 \mu _i^{\left( n \right)}  &= \frac{{\sum\nolimits_{t = 1}^T {x_t P\left( {S_H^{\left( t \right)}  = i \left| {{\bf x},{\bf \Theta }^{\left( {n - 1} \right)} } \right.} \right)} }}{{\sum\nolimits_{t = 1}^T {P\left( {S_H^{\left( t \right)}  = i \left| {{\bf x},{\bf \Theta }^{\left( {n - 1} \right)} } \right.} \right)} }}
  = \frac{{\sum\nolimits_{t = 1}^T {x_t \gamma _i^{\left( t \right)} } }}{{\sum\nolimits_{t = 1}^T {\gamma _i^{\left( t \right)} } }}\, ;
\end{align}\begin{align}
\label{Estimation_state_variance}
  \rho _i^{\left( n \right)}  &= \frac{{\sum\nolimits_{t = 1}^T {\left( {x_t  - \mu _i^{\left( {n - 1} \right)} } \right)^2 P\left( {S_H^{\left( t \right)}  = i \left| {{\bf x},{\bf \Theta }^{\left( {n - 1} \right)} } \right.} \right)} }}{{\sum\nolimits_{t = 1}^T {P\left( {S_H^{\left( t \right)}  = i \left| {{\bf x},{\bf \Theta }^{\left( {n - 1} \right)} } \right.} \right)} }} \nonumber\\
   &= \frac{{\sum\nolimits_{t = 1}^T {\left( {x_t  - \mu _i^{\left( {n - 1} \right)} } \right)^2 \gamma _i^{\left( t \right)} } }}{{\sum\nolimits_{t = 1}^T {\gamma _i^{\left( t \right)} } }}\, ;
\end{align}\begin{align}
\label{Estimation_initial_prob_t0}
 \pi _i^{\left( n \right)}  = \frac{{P\left( {{\bf x},S_H^{\left( 1 \right)}  = i\left| {{\bf \Theta }^{\left( {n - 1} \right)} } \right.} \right)}}{{P\left( {{\bf x}\left| {{\bf \Theta }^{\left( {n - 1} \right)} } \right.} \right)}} = \gamma _i^{\left( 1 \right)}\, ,
\end{align}where $\pi _i^{\left( n \right)} $ represents the posterior probability of the $i^{th}$ solar state conditional on the observation $\bf x$ and the estimated parameter ${\bf \Theta }^{\left( {n - 1} \right)}$. Note that the probability terms $\alpha_i^{(t)}$ and $\beta_i^{(t)}$ in (\ref{Alpha_term}) and (\ref{Beta_term}) can be efficiently carried out using the well-known forward and backward procedures as follows:\begin{align}
\label{update_alpha}
  \alpha _j^{\left( {t + 1} \right)}  = \left( {\sum\nolimits_{i = 0}^{N_H-1} {\alpha _i^{\left( t \right)} } a_{ij}^{\left( {n - 1} \right)} } \right)f_j^{(n-1)} \left( {x_{t + 1} } \right)\, ;
\end{align}\begin{align}
\label{update_beta}
  \beta _i^{\left( t \right)}  =  {\sum\nolimits_{j = 0}^{N_H-1} {a_{ij}^{\left( {n - 1} \right)} f_j^{(n-1)} \left( {x_{t + 1} } \right)} } \beta _j^{\left( {t + 1} \right)} \, ,
\end{align}where the initial values of $\alpha _i^{\left( t \right)}$ and $\beta _i^{\left( t \right)}$ are set as $\alpha _i^{\left( 1 \right)}  = \pi _i^{\left( {n - 1} \right)} f_i \left( {x_1 } \right)$ and $\beta _i^{\left( T \right)}  = 1$. The training procedures from (\ref{Estimation_transition_prob}) to (\ref{Estimation_initial_prob_t0}) are then repeated for several iterations until the values of the parameters get converged. Then, the stationary probability of the Markov chain, $\mbox{\boldmath{$\upsilon $}} = \left[ {P\left( {S_H= 0 } \right), \ldots ,P\left( {S_H= N_H-1 } \right)} \right]^T$, is finally computed by solving the balance equation: \begin{align}
\label{steady_state_prob}
 \left[ \begin{array}{c}
 {\bf A}^{\left( n \right)}  - {\bf I}_{N_H}  \\
 {\bf 1}_{N_H}^T  \\
 \end{array} \right] \mbox{\boldmath{$\upsilon $}} =  \left[ \begin{array}{l}
 {\bf 0}_{N_H}  \\
 1 \\
 \end{array} \right]\, ,
\end{align}where the transition probability matrix is defined as $\left[ {{\bf A}^{\left( n \right)} } \right]_{j,i}  = a_{ij}^{\left( n \right)} $, for $i, j= 0,\ldots,N_H-1$, and $\left[ {\bf A} \right]_{m,n} $ denotes the $(m,n)^{th}$ entry of the matrix $\bf A$.

The training results with respect to the example above are shown in Table \ref{tab:table_em_training} and Fig. \ref{Histogram_irradiance}, where the irradiance measurements are performed every five minutes or fifteen minutes from seven o'clock to seventeen o'clock. We can observe that the histograms of the training results and the measurement data in Fig. \ref{Histogram_irradiance} behave quite similarly when the sampling period is five minutes. Also in Table \ref{tab:table_em_training}, the transition probabilities from the current solar state to the other adjacent states are very small when the measurements are taken at five-minute intervals. In fact, the solar state transition probability largely depends on the sampling period of the measurements, and only a slight increase in the transition probability is observed as the sampling period is increased from five minutes to fifteen minutes.

\begin{table*}
\caption{Training results of the hidden Markov solar power harvesting model.}
\centering
\subtable[Mean, variance and steady state probability.]{
       \begin{tabular}{|c|c|c|c|c|c|c|c|c|}
\hline
 Sampling period & \multicolumn{4}{c|}{5 minutes} & \multicolumn{4}{c|}{15 minutes}\tabularnewline
\hline
\hline
State ($S_{H}=i$) & 0 & 1 & 2 & 3 & 0 & 1 & 2 & 3 \tabularnewline
\hline
$\mu_{i}$ ($10^{4}$ $\mu$W/cm$^2$) & 1.75 & 4.21 & 7.02 & 9.38 & 1.79 & 4.56 & 7.60 & 9.46 \tabularnewline
\hline
 $\rho_i$ & 0.65 & 1.04 & 2.34 & 0.54 & 0.71 & 1.48 & 1.55 & 0.31 \tabularnewline
\hline
 $P$($S_H= i$) & 0.16 & 0.36 & 0.21 & 0.27 & 0.16 & 0.39 & 0.27 & 0.18 \tabularnewline
\hline
\end{tabular}
       \label{tab:firsttable_em_training}
}

\subtable[State transition probability.]{
       \begin{tabular}{|c|c|c|c|c|c|c|c|c|}
\hline
 Sampling period & \multicolumn{4}{c|}{5 minutes} & \multicolumn{4}{c|}{15 minutes}\tabularnewline
\hline
\hline
$a_{ij}$ & $j=0$ & $j=1$ & $j=2$ & $j=3$ & $j=0$ & $j=1$ & $j=2$ & $j=3$\tabularnewline
\hline
$i=0$ & $0.979$ & $0.015$ & $0.006$ & $0$ & 0.938 & 0.057 & 0.005 & 0\tabularnewline
\hline
$i=1$ & $0.005$ & $0.988$ & $0.007$ & $0$ & 0.023 & 0.955 & 0.022 & 0 \tabularnewline
\hline
$i=2$ & $0.006$ & $0.009$ & $0.975$ & $0.010$ & 0 & 0.032 & 0.950 & 0.018 \tabularnewline
\hline
$i=3$ & $0$ & 0 & $0.007$ & $0.993$ & 0.004 & 0 & 0.023 & 0.973 \tabularnewline
\hline
\end{tabular}
       \label{tab:secondtable_em_training}
}
\label{tab:table_em_training}
\end{table*}

The solar power harvesting model is a continuous-time model with respect to the harvested power. In practice, the solar energy is stored in the rechargeable battery to supply the forthcoming communications. The transmission strategy is usually designed on the basis of the required numbers of energy quanta and remains unchanged over a management period of several data packets $T_L$. Here, we map the solar power harvesting model into a discrete energy harvesting model, in which the Markov chain states are described by the numbers of harvested energy quanta. Let $P_U$ be the basic transmission power level of sensor nodes, corresponding to one unit of the energy quantum $E_U= P_U T_L$ during the management period. In addition, for the harvested solar power $P_H$, the obtained energy over the time duration $T_L$ is given by $E_H= P_H T_L$. The numbers of harvested energy quanta, $Q$, at $t=n T_L$ are given as\begin{align}
\label{Accumulated_energy}
  &E_C^{\left( n \right)}  = E_R^{\left( {n - 1} \right)}  + E_H  \, ; \\
\label{Residual_energy_quanta}
  &E_R^{\left( n \right)}  = E_C^{\left( n \right)}  - QE_U ,{\rm   } \,\,Q = \left\lfloor {\frac{{E_C^{\left( n \right)} }}{{E_U }}} \right\rfloor   \, ,
\end{align}where $E_C^{\left( n \right)}$ and $E_R^{\left( n \right)} $ are the accumulated and the residual energy in the capacitor at $t=nT_L$, and $\lfloor \cdot \rfloor$ is the floor function. By assuming that the fluctuation of the harvested power level is quasi-static over many power management runs, it can be analyzed that if $qE_U  \le E_H  \le \left( {q + 1} \right)E_U  $ for some $q$, then the probability of the number of energy quanta, $Q$, can be computed as
\begin{align}
\label{Prob_no_quanta}
  P\left( {Q = i} \right) = \left\{ \begin{array}{l}
 \frac{{E_H  - qE_U }}{{E_U }},{\rm  }\;\;i = q+1 \;\;;\\
 1 - \frac{{E_H  - qE_U }}{{E_U }},{\rm  }\;\;i = q  \;\;;\\
 0,\;\;{\rm otherwise} \;\;.\\
 \end{array} \right.
\end{align}When a sensor node is operated at the $j^{th}$ solar state with the normal distribution $\mathcal{N}\left(x; {\mu _j ,\rho _j } \right)$, the obtained energy $E_H$ is again a normally distributed random variable, which is equal to the solar power per unit area $x$ multiplied by the solar panel area $\Omega_S$, the time duration $T_L$ and the energy conversion efficiency $\vartheta$, i.e., $E_H= x \Omega_S T_L \vartheta$. The conversion efficiency of an energy harvesting device typically ranges between $15\%$ and $20\%$ \cite{S.Sudevalayam11}. Thus, the mean and variance of $E_H$ are respectively given as $\bar \mu _j  = \mu _j \Omega_S T_L \vartheta$ and $\bar \rho _j  = \rho _j \Omega_S^2 T_L^2 \vartheta^2$, and the probability of the number of energy quanta is calculated by using (\ref{Prob_no_quanta}), as follows:
\begin{align}
\label{Prob_no_quanta_jth_state}
   &P\left( {\left. {Q = i} \right|S_H = j} \right)  \\
   &=\left\{ \begin{array}{l}
 \int_{iE_U }^{\left( {i + 1} \right)E_U } {\frac{{\left( {i + 1} \right)E_U  - E_H }}{{E_U }}} \mathcal{N}\left( {E_H ;\bar \mu _j ,\bar \rho _j } \right)dE_H , \;\; i = 0 \;\;;\\
 \int_{iE_U }^{\left( {i + 1} \right)E_U } {\frac{{\left( {i + 1} \right)E_U  - E_H }}{{E_U }}} \mathcal{N}\left( {E_H ;\bar \mu _j ,\bar \rho _j } \right)dE_H  \\
   + \int_{\left( {i - 1} \right)E_U }^{iE_U } {\frac{{E_H  - \left( {i - 1} \right)E_U }}{{E_U }}}\mathcal{ N}\left( {E_H ;\bar \mu _j ,\bar \rho _j } \right)dE_H , \;\; i \ne 0 \;\;.
 \end{array} \right. \nonumber
\end{align}Denote the complementary error function as erfc$(\cdot)$. After some manipulations, we get
\begin{align}
\label{Prob_no_quanta_jth_state_explict}
    &P\left( {\left. {Q = i} \right|S_H = j} \right) \\
    &= \left\{ \begin{array}{l}
 \left( {\left( {i + 1} \right) - \frac{{\bar \mu _j }}{{E_U }}} \right)g_1 \left( {i,\bar \mu _j ,\bar \rho _j } \right) - g_2 \left( {i + 1,\bar \mu _j ,\bar \rho _j } \right), \;\; \\
 \;\;\;\;\;\;\;\;\;\;\;\;\;\;\;\;\;\;\;\;\;\;\;\;\;\;\;\;\;\;\;\;\;\;\;\;\;\;\;\;\;\;\;\;\;\;\;\;\;\;\;\;\;\;\;\;\;\;\;\;\;\;\;\;\;\;\;\;\; i = 0 \;\;; \\
 \left( {\left( {i + 1} \right) - \frac{{\bar \mu _j }}{{E_U }}} \right)g_1 \left( {i,\bar \mu _j ,\bar \rho _j } \right) - g_2 \left( {i + 1,\bar \mu _j ,\bar \rho _j } \right) \\
   + \left( {\frac{{\bar \mu _j }}{{E_U }} - \left( {i - 1} \right)} \right)g_1 \left( {i - 1,\bar \mu _j ,\bar \rho _j } \right) + g_2 \left( {i,\bar \mu _j ,\bar \rho _j } \right),\;\; \\
  \;\;\;\;\;\;\;\;\;\;\;\;\;\;\;\;\;\;\;\;\;\;\;\;\;\;\;\;\;\;\;\;\;\;\;\;\;\;\;\;\;\;\;\;\;\;\;\;\;\;\;\;\;\;\;\;\;\;\;\;\;\;\;\;\;\;\;\;\; i \ne 0 \;\;,
 \end{array} \right. \nonumber
\end{align}where the relevant terms are defined as
\begin{align}
\label{relevant_terms_energy_quanta_prob}
   g_1 \left( {i,\bar \mu _j ,\bar \rho _j } \right) &= \frac{1}{2}\left( {{\rm erfc}\left( {\frac{1}{{\sqrt {2\bar \rho _j } }}\left( {iE_U  - \bar \mu _j } \right)} \right)} \right. \nonumber\\
  &\;\;\;\; \left. {- {\rm erfc}\left( {\frac{1}{{\sqrt {2\bar \rho _j } }}\left( {\left( {i + 1} \right)E_U  - \bar \mu _j } \right)} \right)} \right) \,; \\
   g_2 \left( {i,\bar \mu _j ,\bar \rho _j } \right) &= \sqrt {\frac{{\bar \rho _j }}{{2\pi E_U^2 }}} \left( { \exp \left( { - \frac{1}{{2\bar \rho _j }}\left( {\left( {i - 1} \right)E_U  - \bar \mu _j } \right)^2 } \right)} \right. \nonumber\\
  & \;\;\;\;\left. {- \exp \left( { - \frac{1}{{2\bar \rho _j }}\left( {iE_U  - \bar \mu _j } \right)^2 } \right)} \right)   \, .
\end{align}

\section{Markov Decision Process Using Stochastic Energy Harvesting Models}

We study the adaptive transmissions for sensor communications concerning the channel and battery status, the transmission power, the modulation types, and the stochastic energy harvesting model. Consider a point-to-point communication link with two sensor nodes, where a source node intends to convey data packets to its sink node. Each data packet consists of $L_S$ data symbols at a rate of $R_S$ (symbols/sec), and hence, the packet duration is given by $T_P= L_S/R_S$.

The design framework is formulated as an MDP with the goal of maximizing the long-term net bit rate. As illustrated in Fig. \ref{MDP_state}, the MDP is mainly composed of the state space, the action set, and the state transition probabilities, and it is operated on the time scale of the policy management period $T_L$, covering the time duration of $D$ data packets, i.e., $T_L= DT_P$. Let $\mathcal{S}$ be the state space which is a composite space of the solar state $\mathcal{H}= \{0, \ldots, N_H-1\}$, the channel state $\mathcal{C}= \{0, \ldots, N_C-1\}$ and the battery state $\mathcal{B}= \{0, \ldots, {N_B-1}\}$, i.e., $\mathcal{S}= \mathcal{H} \times \mathcal{C} \times \mathcal{B}$, where $\times$ denotes the Cartesian product. At the $n^{th}$ battery state, we further denote the action space as $\mathcal{A}$ which consists of two-tuple action spaces: transmission power $\mathcal{W}= \{0, \ldots, \min \left\{ {n,N_P  - 1} \right\}\}$ and modulation types $\mathcal{M}= \{0, \ldots, N_M-1\}$. Since the transition probabilities of the channel and battery states are independent of each other, the transition probability from the state ${\left( {S_H, S_C ,S_B } \right) = \left( {j,i,n} \right)}$ to the state ${\left( {S_H, S_C ,S_B } \right) = \left( {j',i',n'} \right)}$ with respect to the action $\left( {W,M} \right) = \left( {w,m} \right)$ under the $j^{th}$ solar state can be formulated as
\begin{align}
\label{Overall_steady_state_probability}
 &P_{w,m} \left( {\left. {\left( {S_H ,S_C ,S_B } \right) = \left( {j',i',n'} \right)} \right|\left( {S_H ,S_C ,S_B } \right) = \left( {j,i,n} \right)} \right) \nonumber\\
 & = P\left( {\left. {S_H  = j'} \right|S_H  = j} \right)P\left( {\left. {S_C  = i'} \right|S_C  = i} \right) \nonumber\\
 &\;\;\;\;\;\;\;\;\;\;\;\;\;\;\;\;\;\;\;\;\;\;\;\cdot P_w \left( {\left. {S_B  = n'} \right|\left( {S_H ,S_B } \right) = \left( {j,n} \right)} \right) \,,
\end{align}where the transition of the battery states is irrespective of the adopted modulation type. Note that the transition probability of the solar states, $P\left( {\left. {S_H  = j'} \right|S_H  = j} \right)$, can be directly obtained by using the training results in (\ref{Estimation_transition_prob}). In the following, we elaborate on each of the components of the MDP in Fig. \ref{MDP_state} before describing the solution of the Bellman optimality equation.

\begin{figure}[t]
\centering
\includegraphics[width=0.4\textwidth]{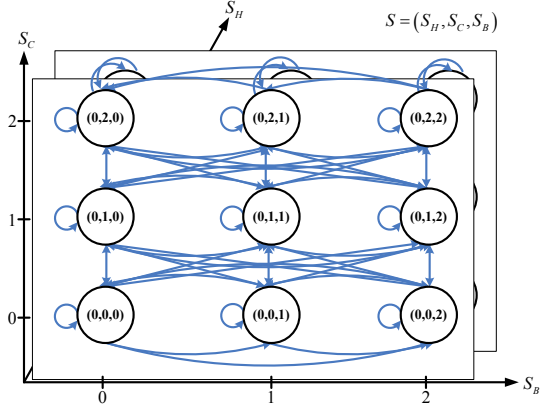}
\caption{Markov chain for the Markov decision process ($N_H=2$, $N_C=3$ and $N_B=3$).}
\label{MDP_state}
\end{figure}

\subsection{Actions of Transmission Power and Modulation Types}
When the action $(w, m) \in \mathcal{W} \times \mathcal{M}$ is chosen by the sensor node, the transmission power and modulation levels are respectively set as $w P_U$ and as $2^{\chi_m}$-ary phase shift keying (PSK) or quadrature amplitude modulation (QAM), e.g., QPSK, 8PSK and 16QAM, during the policy management period, where $\chi_m$ represents the number of information bits in each data symbol. Remember that $P_U$ is the basic transmission power level of the sensor node if data transmission takes place. On the other hand, if $w=0$, the node remains silent without transmitting data packets.

\subsection{Channel State and State Transition Probability}
The wireless channel is quantized using a finite number of thresholds ${\bf \Gamma } = \left\{ {0 = \Gamma _0 ,\Gamma _1 , \ldots ,\Gamma _{N_C }   }\right.$ $\left. = \infty\right\}$, where $\Gamma _i  < \Gamma _j $ for all $i < j$. The Rayleigh fading channel is said to be in the $i^{th}$ channel state, for $i= 0, \ldots, N_C-1$, if the instantaneous channel power, $\gamma$, belongs to the interval $\left[ {\Gamma _i ,\Gamma _{i + 1} } \right)$. We assume that the wireless channel fluctuates slowly and the policy management period is shorter than the coherence time of the fading channel. Hence, the channel gain remains quasi-static during each transmission period of $T_L$, and the channel state transition occurs only from the current state to its neighboring states. The stationary probability of the $i^{th}$ state is
\begin{align}
\label{Channel_steady_state_probability}
  P\left( {S_C  = i} \right) &= \int_{\Gamma _i }^{\Gamma _{i + 1} } {\frac{1}{{\gamma _0 }}\exp \left( { - \frac{\gamma }{{\gamma _0 }}} \right)} d\gamma  \\
  &= \exp \left( { - \frac{{\Gamma _i }}{{\gamma _0 }}} \right) - \exp \left( { - \frac{{\Gamma _{i + 1} }}{{\gamma _0 }}} \right) \, , \nonumber
\end{align}where $\gamma_0= \mathbb{E}\left[ \gamma \right]$ is the average channel power. Define $h\left( \gamma  \right) = \sqrt {{{2\pi \gamma } \mathord{\left/
 {\vphantom {{2\pi \gamma } {\gamma _0 }}} \right.
 \kern-\nulldelimiterspace} {\gamma _0 }}} f_D \exp \left( {{{ - \gamma } \mathord{\left/
 {\vphantom {{ - \gamma } {\gamma _0 }}} \right.
 \kern-\nulldelimiterspace} {\gamma _0 }}} \right)$, where $f_D$ is the maximum Doppler frequency, normalized by the policy management rate $1/T_L$. The state transition probabilities are determined by \cite{H.S.Wang95} \begin{align}
\label{Channel_state_transition_probability}
  &P\left( {\left. {S_C  = k} \right|S_C  = i} \right) \\
  &= \left\{ \begin{array}{l}
 \frac{{h\left( {\Gamma _{i + 1} } \right)}}{{P\left( {S_C  = i} \right)}},\;\;k = i + 1,\;\;i = 0, \ldots ,N_C  - 2 \,; \\
 \frac{{h\left( {\Gamma _i } \right)}}{{P\left( {S_C  = i} \right)}},\;\;k = i - 1,\;\;i = 1, \ldots ,N_C  - 1 \,;\\
 1 - \frac{{h\left( {\Gamma _i } \right)}}{{P\left( {S_C  = i} \right)}} - \frac{{h\left( {\Gamma _{i + 1} } \right)}}{{P\left( {S_C  = i} \right)}},\;\;k = i,\;\; i = 1, \ldots ,N_C  - 2 \,,\nonumber
 \end{array} \right.
\end{align}and the transition probabilities of $P\left( {\left. {S_C  = i} \right|S_C  = i} \right)$ for the boundaries are given by\begin{align}
\label{Channel_state_transition_probability_2}
 &P\left( {\left. {S_C  = 0} \right|S_C  = 0} \right) = 1 - P\left( {\left. {S_C  = 1} \right|S_C  = 0} \right) \,; \nonumber\\
 &P\left( {\left. {S_C  = N_C  - 1} \right|S_C  = N_C  - 1} \right) \nonumber\\
 &\;\;\;\;\;\;\;\;\;\;\;= 1 - P\left( {\left. {S_C  = N_C  - 2} \right|S_C  = N_C  - 1} \right) \,.
\end{align}

\subsection{Battery State and State Transition Probability}
When the sensor node is run at the $n^{th}$ battery state, it means that the available energy in the battery is stored up to $n$ energy quanta, i.e., $n E_U$. Remember that one energy quantum, $E_U$, represents the basic energy unit required for the sensor node to adopt the basic transmission power level $P_U$ over the time duration $T_L$. At the $n^{th}$ battery state, the possible action that can be performed by the sensor is from $0$ to $\min \left\{ {n,N_P  - 1} \right\}$, and the $w^{th}$ power action will consume a total of $w$ energy quanta for data transmission. In particular, the sensor is unable to make any transmission when the energy is completely depleted at the $0^{th}$ state. Once the underlying parameters of the $N_H$ solar states are appropriately estimated through the measurement data, the state transition probabilities for the $n^{th}$ battery state and the $w^{th}$ power action under the $j^{th}$ solar state can be constructed by exploiting (\ref{Prob_no_quanta_jth_state_explict}), as follows:\begin{align}
\label{Battery_state_transition_probability}
 &P_w \left( {\left. {S_B  = k} \right|\left( {S_H,S_B } \right) = \left( {j,n} \right)} \right) \\
 & = \left\{ \begin{array}{l}
 P\left( {\left. {Q = k - n + w} \right|S_H = j} \right),\;\; \\
 \;\;\;\;\;\;\;\;\;\;\;\;\;\;\;\;\;\;\;\;\;\;\;\;\;\;\;\;\;\;\;\;\;\;\;\;\;\;\;\;\;\; k = n - w, \ldots ,N_B  - 2 \,; \\
 1 - \sum\nolimits_{i = 0}^{N_B  - 2 - n + w} {P \left( {\left. {Q  = i} \right| {S_H } =  {j}} \right)} ,\;\; k = N_B  - 1 \,,
 \end{array} \right. \nonumber
\end{align}for $n=0,\ldots,N_B-1$ and $w=0,\ldots,\min \left\{ {n,N_P-1 } \right\}$.

\subsection{Reward Function}
We adopt the average number of good bits per packet transmission as our reward function. It is assumed that the sink node periodically feeds back the instantaneous channel state information to the source node for planning the next transmission policy. Let $P_{e,b} \left( {\left( {S_C ,S_B, W,M} \right) = \left( {i,n, w,m} \right)} \right)$ be the average bit error rate (BER) at the $i^{th}$ channel state and the $n^{th}$ battery state when the action $\left( {W,M} \right) = \left( {w,m} \right)$ is taken, for $w=0,\ldots,\min \left\{ {n,N_P-1 } \right\}$ and $m=0,\ldots,N_M-1$. By applying the upper bound of the Q-function $Q\left( x \right) \le \frac{1}{2}\exp \left( { - \frac{{x^2 }}{2}} \right)$, it can be computed as\begin{align}
\label{Reward_function_BER}
&P_{e,b} \left( {\left( {S_C ,S_B,W,M} \right) = \left( {i,n,w,m} \right)} \right) \\
&= \frac{  \int_{\Gamma _i }^{\Gamma _{i + 1} } {\alpha _m Q\left( {\sqrt {\frac{{\beta _m wP_U \gamma }}{{N_0 }}} } \right)} \frac{1}{{\gamma _0 }}\exp \left( { - \frac{\gamma }{{\gamma _0 }}} \right)d\gamma   }{{\int_{\Gamma _i }^{\Gamma _{i + 1} } {\frac{1}{{\gamma _0 }}\exp \left( { - \frac{\gamma }{{\gamma _0 }}} \right)d\gamma } }} \nonumber\\
 & \le  \frac{  \frac{{\alpha _m }}{{w\beta _m \gamma _U  + 2}}   }{\exp \left( { - \frac{{\Gamma _i }}{{\gamma _0 }}} \right) - \exp \left( { - \frac{{\Gamma _{i + 1} }}{{\gamma _0 }}} \right)} \nonumber\\
 &\;\;\;\cdot \Bigg( \exp \left( { - \frac{1}{{2\gamma _0 }}\left( {w\beta _m \gamma _U  + 2} \right)\Gamma _i } \right) \nonumber\\
 &\;\;\;\;\;\;\;\;\;\;\;\;\;\;\;\;\;\;\;\;\;\;\;\;\;\;\;- \exp \left( { - \frac{1}{{2\gamma _0 }}\left( {w\beta _m \gamma _U  + 2} \right)\Gamma _{i + 1} } \right) \Bigg) \nonumber\\
 & \triangleq \eta \left( {i,n,w,m} \right)\,, \nonumber
\end{align}where $\alpha_m$ and $\beta_m$ are modulation specific constants, $N_0$ is the noise power, and $\gamma _U  = {{P_U \gamma _0 } \mathord{\left/
 {\vphantom {{P_U \gamma _0 } {N_0 }}} \right.
 \kern-\nulldelimiterspace} {N_0 }}$ is the average signal-to-noise power ratio (SNR) when the basic transmission power level is adopted. Hence, the probability of successful packet transmission (i.e., all $\chi_mL_S$ bits in a packet are successfully detected) is expressed as
 \begin{align}
\label{Reward_function_PER}
&P_{f,k} \left( {\left( {S_C ,S_B,W,M} \right) = \left( {i,n,w,m} \right)} \right) \\
& = \left( {1 - P_{e,b} \left( {i,n,w,m} \right)} \right)^{\chi_mL_S }  \,.\nonumber
\end{align}If the sensor fails to decode the received data packet, the retransmission mechanism is employed in the sensor communications. Let $Z$ be the total number of retransmissions required to successfully convey a data packet. By assuming that each transmission is independent, the variable $Z$ can be expressed as a geometric random variable, and the average number of retransmissions for the successful reception of a packet is given by
\begin{align}
\label{Reward_function_retransmission_number}
\mathbb{E}\left[ Z  \right] = {1 \mathord{\left/
 {\vphantom {1 {P_{e,k} \left( {i,n,w,m} \right)}}} \right.
 \kern-\nulldelimiterspace} {P_{f,k} \left( {i,n,w,m} \right)}}   \,.
\end{align}
Since $T_L= DT_P$, the number of effective data packets due to retransmission during each management period is in average given as
\begin{align}
\label{Average_number_successful_packet}
D_{E}  = \frac{D}{{\mathbb{E}\left[ Z \right]}} = \frac{{T_L }}{{\mathbb{E}\left[ Z \right]T_P }}   \,.
\end{align}
From (\ref{Reward_function_BER})-(\ref{Average_number_successful_packet}), the net bit rate can therefore be lower bounded by
\begin{align}
\label{Reward_function_lower_bound}
 &G_{w,m}\left( {\left( {S_C ,S_B } \right) = \left( {i,n} \right)} \right)= \frac{1}{{T_L }}D_E \chi _m L_S \\
 & = \frac{1}{{T_P }} \chi_mL_S \left( {1 - P_{e,b} \left( {i,n,w,m} \right)} \right)^{\chi_mL_S }  \nonumber\\
 &\ge \frac{1}{{T_P }} \chi_mL_S \left( {1 - \eta \left( {i,n,w,m} \right)} \right)^{\chi_mL_S } \,.\nonumber
\end{align}

\begin{definition}
\label{definition1}
The reward function for the action $(W, M)= (w, m)$ at the state $(S_C, S_B)= (i, n)$ is defined as
\begin{align}
\label{Reward_function_good_bits}
&R_{w,m}\left( {\left( {S_C ,S_B } \right) = \left( {i,n} \right)} \right) \\
&= \left\{ \begin{array}{l}
 0, \;\; w=0 \,;\\
 \frac{1}{{T_P }} \chi_mL_S \left( {1 - \eta \left( {i,n,w,m} \right)} \right)^{\chi_mL_S } , \;\; w \in \mathcal{W} \backslash \left\{ 0 \right\} \,.
 \end{array} \right.\nonumber
\end{align}
The reward function has the following properties:
\begin{enumerate}[(a)]
\item $R_{w,m}\left( {\left( {S_C ,S_B } \right) = \left( {i,n} \right)} \right)=0$ for $w=0$, because no data transmission occurs when the transmission power is zero.
\item $R_{w,m}\left( {\left( {S_C ,S_B } \right) = \left( {i,n} \right)} \right) = R_{w',m}\left( {\left( {S_C ,S_B } \right) = \left( {i,n'} \right)} \right)$ for any $w=w'$, because the immediate reward is independent of the battery state.
\item $R_{w,m}\left( {\left( {S_C ,S_B } \right) = \left( {i,n} \right)} \right) \geq R_{w,m}\left( {\left( {S_C ,S_B } \right) = \left( {i',n} \right)} \right)$ for any $i \geq i'$, which means a higher immediate reward is obtained as the channel condition improves.
\end{enumerate}
\end{definition}

\subsection{Transmission Policies}
Two transmission policies are implemented regarding the affordable actions in the action set $\mathcal{A}= \mathcal{W} \times \mathcal{M} $.
\begin{definition}
\label{definition2}
(Composite policy) A transmission policy is composite, if $N_P  \ge N_B $. The action set at the $n^{th}$ battery state is given by $\mathcal{A}= \left\{ {0, \ldots ,n} \right\} \times \left\{ {0, \ldots ,N_M  - 1} \right\}$.
\end{definition}

\begin{definition}
\label{definition3}
(On-off policy) A transmission policy is on-off, if $N_P  = 2$ and $N_M  = 1$. The action set at the $n^{th}$ battery state is given by $\mathcal{A} = \left\{ {0, \ldots ,\min \left\{ {n,1} \right\}} \right\} \times \left\{ 0 \right\}$.
\end{definition}

In the composite policy, the power action could be unconditional as long as the resultant energy consumption during the policy management period is below the battery supply. On the contrary, only a single power and modulation level is accessible in the on-off policy whenever the sensor is active for data transmission. The composite policy undoubtedly has better performance than the on-off policy, whereas the later one, as its name suggests, operates in a relatively simple on-off switching mode for data transmission.

\section{Optimization of Transmission Policies}
The main goal of the MDP is to find a decision policy $\pi(s): \mathcal{S}\rightarrow \mathcal{A}$ that specifies the optimal action in the state $s$ and maximizes the objective function. Since we are interested in maximizing some cumulative functions of the random rewards in the Markov chain, the expected discounted infinite-horizon reward is formulated by using (\ref{Reward_function_good_bits}):
\begin{align}
\label{Long_term_expected_reward_function}
&V_\pi  \left( {s_0 } \right) = \mathbb{E}_\pi  \left[ {\sum\nolimits_{k = 0}^\infty  {\lambda ^k R_{\pi \left( {s_k } \right)}\left( {s_k} \right)} } \right],\;\;\nonumber \\
&\;\;\;\;\;\;\;\;\;\;\;\;\;\;\;\;\;\;\;\;\;\;\;\;\;\;\;\;\;\;\;\;\;\;\;\;\;\;\;\;\;s_k  \in \mathcal{S},\;\; \pi \left( {s_k } \right) \in \mathcal{A}  \,,
\end{align}where $V_\pi  \left( {s_0 } \right)$ is the expected reward starting from the initial state $s_0$ and continuing with the policy $\pi$ from then on, and $0\leq \lambda \leq 1$ is a discount factor. The adjustment of the discount factor provides a wide range of performance characteristics. Particularly, the long run average objective can be closely approximated by using a discount factor close to one. It is known that the optimal value of the expected reward is unrelated to the initial state if the states of the Markov chain are assumed to be recurrent. From (\ref{Overall_steady_state_probability}) and (\ref{Long_term_expected_reward_function}), there exists an optimal stationary policy $\pi^* \left( s \right)$ that satisfies the Bellman's equation:
\begin{align}
\label{Bellman_equation}
&V_{\pi ^ *  } \left( s \right) = \mathop {\max }\limits_{a \in \mathcal{A}} \left( {R_a\left( {s} \right) + \lambda \sum\limits_{s'\in \mathcal{S}} {P_a\left( {\left. {s'} \right|s} \right)V_{\pi ^ *  } \left( {s'} \right)} } \right),\;\; \nonumber \\
& \;\;\;\;\;\;\;\;\;\;\;\;\;\;\;\;\;\;\;\;\;\;\;\;\;\;\;\;\;\;\;\;\;\;\;\;\;\;\;\;\;\;\;\;\;\;\;\;\;\;\;\;\;\;\;\;\;\;\;\;\;\;\;\;\; s \in \mathcal{S}  \,.
\end{align}The well-known value iteration approach, which is known as dynamic programming, can be applied to iteratively find the optimal policy:
\begin{align}
\label{Value_iteration_1}
&V_{i + 1}^a \left( s \right) = R_a\left( {s} \right) + \lambda \sum\limits_{s' \in \mathcal{S}} {P_a\left( {\left. {s'} \right|s} \right)V_i \left( {s'} \right)} ,\;\; \nonumber \\
&\;\;\;\;\;\;\;\;\;\;\;\;\;\;\;\;\;\;\;\;\;\;\;\;\;\;\;\;\;\;\;\;\;\;\;\;\;\;\;\;\;\;\;\;\;\;\;\;\;\;\; s \in \mathcal{S},\;\;a \in \mathcal{A}  \,;
\end{align}
\begin{align}
\label{Value_iteration}
V_{{i + 1} } \left( s \right) = \mathop {\max }\limits_{a \in \mathcal{A}} \left\{ {V_{i + 1}^a \left( s \right)} \right\},\;\; s \in \mathcal{S}  \,,
\end{align}where $i$ is the iteration index, and the initial value of $V_{0 } \left( s \right)$ is set as zero for all $s\in \mathcal{S}$. The update rule is repeated for several iterations until a stop criterion is satisfied, i.e., $\left| {V_{{i + 1} } \left( s \right) - V_i \left( s \right)} \right| \le \varepsilon $.

In real applications, the channel state of the communication link can be reliably obtained at the transmitter via channel feedback information. The belief of the solar state can be calculated from the observation prior to the action decision for each policy management period. Let ${x}^{\left( t \right)}$ be the average value of the measured solar data during the $t^{th}$ management period, and $\zeta _j^{\left( {t - 1} \right)}  = P\left( {\left. {S_H^{(t-1)}  = j} \right|{x}^{\left( 1 \right)} , \ldots ,{x}^{\left( {t - 1} \right)} } \right)$ be the belief of the $j^{th}$ solar state according to the historical observation up to the $(t-1)^{th}$ period. With the solar power harvesting model, the belief information at the $t^{th}$ period can be updated using Bayes' rule as follows:
\begin{align}
\label{blief_information_update}
\zeta _j^{\left( t \right)}  = \frac{{\sum\nolimits_{i = 0}^{N_H  - 1} {\zeta _i^{\left( {t - 1} \right)} a_{ij} {f_j \left( {x^{\left( t \right)} } \right)} } }}{{\sum\nolimits_{j' = 0}^{N_H  - 1} {\sum\nolimits_{i' = 0}^{N_H  - 1} {\zeta _{i'}^{\left( {t - 1} \right)} a_{i'j'} {f_{j'} \left( {x^{\left( t \right)} } \right)} } } }} \,,
\end{align}where $f_j \left( x \right)$ and $a_{ij}$ are the likelihood function and the state transition probability defined in (\ref{xi_term}) and (\ref{Estimation_transition_prob}), respectively. The final task is to apply the belief information $\zeta _j^{\left( {t} \right)} $ for deciding the action at each management period. We consider the following mixed strategy. Assuming that the current channel and battery states, $\left( {S_C ,S_B } \right) = \left( {i,n} \right)$, are known at each period, the action corresponding to the $j^{th}$ solar state is chosen with probability proportional to the belief information $\zeta _j^{\left( t \right)}$.

\section{Optimal On-Off Transmission Policies}

\subsection{Threshold Structure of Transmission Policies}
To facilitate analysis, we focus on a simple on-off transmission policy and drop the modulation type index $m$ for simplicity, i.e., $a=w \in\{0,1\}$. From (\ref{Prob_no_quanta_jth_state_explict}), (\ref{Overall_steady_state_probability}) and (\ref{Channel_state_transition_probability})-(\ref{Battery_state_transition_probability}), the expected reward function with respect to the action $a$ in (\ref{Value_iteration_1}) can be rewritten as
\begin{align}
\label{simple_expected_reward_reformulation}
 &V_{i + 1}^a \left( {z,x,y} \right) \\
 &= R_a \left( {x,y} \right) + \lambda \sum\limits_{j=0}^{N_H-1}{P\left( {\left. {S_H  = j} \right|S_H  = z} \right)} \nonumber \\
 &\;\;\cdot \sum\limits_{l = \max \left\{ {0,x - 1} \right\}}^{\min \left\{ {x + 1, N_C-1} \right\} } {P\left( {\left. {S_C  = l} \right|S_C  = x} \right)}  \nonumber\\
 &\;\;\cdot \sum\limits_{q = 0}^\infty  {P\left( {\left. {Q = q} \right|S_H  = z} \right)V_i \left( {j, l,\min \left\{ {N_B-1 ,y - a + q} \right\}} \right)}
 \nonumber\\
 &= R_a \left( {x,y} \right) + \lambda \cdot \mathbb{E}_{z,x,y}\left[ {V_i \left( {j, l,\min \left\{ {N_B-1 ,y - a + q} \right\}} \right)} \right] \,, \nonumber
\end{align}where the subscript in $\mathbb{E}_{z,x,y}\left[\cdot\right]$ is used to indicate the associated solar, channel and battery states.

\begin{lemma}
\label{lemma1}
For any fixed solar state $z \in \mathcal{H}$ and channel state $x \in \mathcal{C}$, $V_i^a \left( {z, x,y - 1} \right) \le V_i^a \left( {z, x,y} \right)$, $\forall y \in \mathcal{B} \backslash \{0\}$ and $a=0, 1$. Moreover, $V_i \left( {z, x,y - 1} \right) \le V_i \left( {z, x,y} \right)$, $\forall y \in \mathcal{B}\backslash \{0\}$
\end{lemma}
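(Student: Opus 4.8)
The plan is to prove both inequalities simultaneously by induction on the value-iteration index $i$, exploiting the structure of the reformulated update (\ref{simple_expected_reward_reformulation}). The base case $i=0$ is immediate: since $V_0(z,x,y)=0$ for every state, $V_0^a(z,x,y-1)=R_a(x,y)=V_0^a(z,x,y)$ and $V_0$ is trivially non-decreasing in $y$.

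For the inductive step, assume $V_i(z,x,y-1)\le V_i(z,x,y)$ for all $z\in\mathcal{H}$, $x\in\mathcal{C}$, $y\in\mathcal{B}\backslash\{0\}$. Fix $z$, $x$, an action $a\in\{0,1\}$, and a battery state $y$ for which $a$ is feasible at both $y$ and $y-1$. In (\ref{simple_expected_reward_reformulation}) the immediate reward $R_a(x,y)$ does not depend on $y$ by property (b) of Definition \ref{definition1}, so the two updates differ only through the discounted expectation. The key step is a \emph{monotone coupling} argument: for every realization of the solar transition $j$, the channel transition $l$, and the number of harvested quanta $q$, one has $\min\{N_B-1,\,(y-1)-a+q\}\le\min\{N_B-1,\,y-a+q\}$, with both expressions being valid battery indices. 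Applying the induction hypothesis to $V_i(j,l,\cdot)$ term by term, and using that the weights $P(S_H=j|S_H=z)$, $P(S_C=l|S_C=x)$ and $P(Q=q|S_H=z)$ are non-negative and sum to one, the expectation evaluated at $y-1$ does not exceed the one evaluated at $y$. Hence $V_{i+1}^a(z,x,y-1)\le V_{i+1}^a(z,x,y)$, establishing the first claim at iteration $i+1$.

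For the monotonicity of $V_{i+1}$ itself, recall from Definition \ref{definition3} that in the on-off policy the admissible action set at battery state $y$ is $\{0,\dots,\min\{y,1\}\}$. For $y\ge 2$, both $a=0$ and $a=1$ are admissible at $y$ and at $y-1$, so taking the maximum over $a$ of the inequality just proved gives $V_{i+1}(z,x,y-1)\le V_{i+1}(z,x,y)$. For the remaining boundary case $y=1$ versus $y-1=0$, only $a=0$ is admissible at battery state $0$, but $V_{i+1}(z,x,1)\ge V_{i+1}^0(z,x,1)\ge V_{i+1}^0(z,x,0)=V_{i+1}(z,x,0)$, again using the action-value inequality for $a=0$. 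This closes the induction and proves the lemma.

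I expect no genuine obstacle here; the argument is essentially bookkeeping. The two points that require care are (i) confirming that the immediate reward is truly battery-independent, so that the entire $y$-dependence is confined to the discounted term, and (ii) checking that the truncation at $N_B-1$ preserves the termwise inequality and that action feasibility at the lowest battery levels is treated separately — which is precisely why the monotonicity of $V_i$ must be argued through the $\max$ over admissible actions rather than inherited directly from the action-value inequalities.
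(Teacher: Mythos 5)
Your proof is correct and follows essentially the same route as the paper's: induction on the value-iteration index, using the battery-independence of the immediate reward (property (b) of Definition \ref{definition1}) and a termwise application of the induction hypothesis inside the discounted expectation via $\min\{N_B-1,(y-1)-a+q\}\le\min\{N_B-1,y-a+q\}$. Your explicit treatment of the boundary case $y=1$ versus $y-1=0$, where only $a=0$ is admissible at the empty battery state, is a small point of extra care that the paper's proof passes over silently but which does not change the argument.
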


\begin{proof}
 From (\ref{Value_iteration}), if $V_i^a \left( {z, x,y - 1} \right) \le V_i^a \left( {z, x,y} \right)$ is satisfied, it implies
\begin{align}
\label{Lemma1_proof_relationship}
 &V_i \left( {z, x,y - 1} \right) = \max \limits_{a \in \left\{ {0,1} \right\}} \left\{V_i^a \left( {z, x,y - 1} \right)\right\} \nonumber\\
 &\;\;\;\;\;\;\;\;\;\;\;\;\;\;\;\;\;\;\le \max \limits_{a \in \left\{ {0,1} \right\}} \left\{V_i^a \left( {z,x,y} \right)\right\} = V_i \left( {z,x,y} \right) \,.
\end{align}
We prove the lemma by the induction. From (\ref{simple_expected_reward_reformulation}) and the initial condition $V_0(s)=0$, the statement is held for $i=1$ because  \begin{align}
\label{Lemma1_proof_i1_first}
  &V_1^a \left( {z,x,y - 1} \right) = R_a \left( {x,y - 1} \right) \nonumber\\
  &\;\;\;\;\;\;\;\;\;\;\;\;\;\;\;\;\;\;= R_a \left( {x,y} \right) = V_1^a \left( {z,x,y} \right), \;\; a \in \{0, 1\} \,.
\end{align}Hence, we obtain $V_1 \left( {z,x,y - 1} \right) = V_1 \left( {z,x,y} \right)$. Assume $i=k$ holds, and for any $z\in \mathcal{H}$ and $x\in \mathcal{C}$, it gives $V_{k}\left( {z,x,y - 1} \right) \le V_{k} \left( {z,x,y} \right)$, $\forall y \in \mathcal{B} \backslash \{0\}$. Using (\ref{simple_expected_reward_reformulation}), we prove that for $i=k+1$:
\begin{align}
\label{Lemma1_proof_ik+1_first}
 &V_{k + 1}^a \left( {z,x,y} \right) - V_{k + 1}^a \left( {z,x,y - 1} \right)  \\
 &= \lambda \sum\limits_{j=0}^{N_H-1}{P\left( {\left. {S_H  = j} \right|S_H  = z} \right)} \nonumber\\
 & \cdot \sum\limits_{l = \max \left\{ {0,x - 1} \right\}}^{\min \left\{ {x + 1, N_C-1} \right\} } {P\left( {\left. {S_C  = l} \right|S_C  = x} \right) \sum\limits_{q = 0}^\infty {P\left( {\left. {Q = q} \right|S_H  = z} \right)} }  \nonumber\\
 & \cdot  \Big( {V_k \left( {j, l,\min \left\{ {N_B-1 ,y - a + q} \right\}} \right)} \nonumber \\
 &\;\;\;\;\;\;\;\;\;\;\;\;\;\;\;\;{- V_k \left( {j,l,\min \left\{ {N_B-1 ,y - 1 - a + q} \right\}} \right)} \Big)  \ge 0 \,.\nonumber
\end{align}This thereby implies that $V_{k+1} \left( {z,x,y - 1} \right) \le V_{k+1} \left( {z,x,y} \right)$, and the statement holds for $i=k+1$.

\end{proof}

\begin{theorem} For the optimal transmission policy, the long-term expected reward is monotonically non-decreasing with respect to the battery state. In other words, for any fixed solar state $z \in \mathcal{H}$ and channel state $x \in \mathcal{C}$, $V_{\pi ^ *  } \left( {z,x,y - 1} \right) \le V_{\pi ^ *  } \left( {z,x,y} \right)$, $\forall y \in \mathcal{B}\backslash \{0\}$.
\end{theorem}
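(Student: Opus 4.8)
The plan is to derive the theorem as an immediate consequence of Lemma \ref{lemma1} by passing to the limit in the value iteration. First I would recall the standard convergence property of value iteration for discounted MDPs: since $0 \le \lambda < 1$ and the reward function $R_a(\cdot)$ defined in (\ref{Reward_function_good_bits}) is bounded (each term $\chi_m L_S (1-\eta(i,n,w,m))^{\chi_m L_S}/T_P$ is finite and there are finitely many states and actions), the update operator in (\ref{Value_iteration_1})--(\ref{Value_iteration}) is a $\lambda$-contraction on the space of bounded functions on the finite set $\mathcal{S}$ under the sup-norm. Consequently $V_i(s) \to V_{\pi^*}(s)$ for every $s \in \mathcal{S}$ as $i \to \infty$, where $V_{\pi^*}$ is the unique fixed point solving the Bellman equation (\ref{Bellman_equation}).

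Next I would simply invoke Lemma \ref{lemma1}: for every iteration index $i$ and every fixed solar state $z \in \mathcal{H}$ and channel state $x \in \mathcal{C}$, one has $V_i(z,x,y-1) \le V_i(z,x,y)$ for all $y \in \mathcal{B}\backslash\{0\}$. Because this weak inequality holds uniformly in $i$ and weak inequalities are preserved under pointwise limits, letting $i \to \infty$ on both sides yields $V_{\pi^*}(z,x,y-1) \le V_{\pi^*}(z,x,y)$ for all $y \in \mathcal{B}\backslash\{0\}$, which is precisely the assertion of the theorem.

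I do not expect a genuine obstacle here; the only steps requiring a word of justification are the convergence of value iteration to the Bellman-optimal value and the fact that monotonicity is inherited by the limit, both of which are textbook facts for discounted MDPs with bounded rewards. If a fully self-contained argument were preferred, one could bypass the limit altogether: observe that the Bellman operator is monotone and that $R_a(x,y)$ is independent of $y$ by property (b) of Definition \ref{definition1}, so the set of functions that are non-decreasing in the battery state is closed under the operator and closed in the sup-norm; hence its fixed point $V_{\pi^*}$ lies in this set. This is exactly the inductive step (\ref{Lemma1_proof_ik+1_first}) of Lemma \ref{lemma1} applied to $V_{\pi^*}$ in place of $V_k$, using $y-a+q$ and $y-1-a+q$ together with the truncation $\min\{N_B-1,\cdot\}$ being order-preserving.
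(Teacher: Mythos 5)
Your proposal is correct and follows essentially the same route as the paper: the paper's proof is exactly ``apply Lemma \ref{lemma1} and pass to the limit of the value iteration,'' and you have simply supplied the standard justifications (contraction of the Bellman operator for $\lambda<1$ and preservation of weak inequalities under pointwise limits) that the paper leaves implicit. No further changes are needed.
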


\begin{proof} By applying Lemma \ref{lemma1} and following the value iteration algorithm, the theorem is proved when the algorithm has converged.
\end{proof}

Now we turn to describing the structure of the on-off transmission policies. Since no transmission (i.e., $a=0$) is the only action when the battery state is zero, we concentrate on the actions for $y \in \mathcal{B}\backslash \{0\}$ in the following.

\begin{lemma}
\label{lemma2}
For each $z \in \mathcal{H}$, $x \in \mathcal{C}$ and $y \in \mathcal{B}\backslash \{0\}$, define two difference functions:\begin{align}
\label{Lemma2_definition_1}
\Theta _i \left( {z,x,y} \right) = V_i^1 \left( {z,x,y} \right) - V_i^0 \left( {z,x,y} \right) \,;
\end{align}\begin{align}
\label{Lemma2_definition_2}
\Lambda _i \left( {z,x,y} \right) =
 &\mathbb{E}_{z,x,y} \Big[ {V_i^1 \left( {j,l,\min \left\{ {N_B-1 ,y + q} \right\}} \right)} \nonumber\\
 &\;\;-{ V_i^0 \left( {j,l,\min \left\{ {N_B-1 ,y - 1 + q} \right\}} \right)} \Big] \,.
\end{align}
The function $\Theta _i \left( {z,x,y} \right)$ is monotonically non-decreasing in $y \in \mathcal{B}\backslash \{0\}$, if the function $\Lambda _t \left( {z,x,y} \right) $ is non-increasing in $y \in \mathcal{B}\backslash \{0\}$, $\forall t<i$, $z \in \mathcal{H}$ and $x \in \mathcal{C}$.
\end{lemma}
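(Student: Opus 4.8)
The plan is to reduce the monotonicity of $\Theta_i(z,x,\cdot)$ to a discrete concavity property --- non-increasing increments --- of the value function $V_{i-1}$ in the battery coordinate, and then to establish that concavity by induction on the iteration index, the hypothesis on $\Lambda_t$ being precisely what is needed to splice together the two branches of the maximization in the value iteration.

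\emph{Reduction.} Fix $z\in\mathcal{H}$ and $x\in\mathcal{C}$. Expanding $V_i^1(z,x,y)$ and $V_i^0(z,x,y)$ through (\ref{simple_expected_reward_reformulation}) and invoking the reward properties of Definition \ref{definition1}, namely $R_0\equiv 0$ and $R_1(x,y)$ independent of $y$, the immediate-reward part of $\Theta_i(z,x,y)$ reduces to a $y$-independent term. The crucial observation is that, after action $a=1$ from battery level $y$, the reachable next battery levels $\min\{N_B-1,\,y-1+q\}$ coincide, with identical probabilities, with those reachable after action $a=0$ from level $y-1$; hence $\Theta_i(z,x,y)=R_1(x,y)+\lambda(G(y-1)-G(y))$, where $G(y):=\mathbb{E}_{z,x}[V_{i-1}(j,l,\min\{N_B-1,y+q\})]$ and the expectation is over the solar, channel and energy-quanta transitions. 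Consequently $\Theta_i(z,x,y)-\Theta_i(z,x,y-1)=\lambda(2G(y-1)-G(y)-G(y-2))$, which is nonnegative exactly when $G$ has non-increasing increments. Since $y\mapsto\min\{N_B-1,y+q\}$ is non-decreasing with non-increasing increments, and both the averaging over $q$ and over the transition kernels preserve discrete concavity, it suffices to show that $V_{i-1}(j,l,\cdot)$ has non-increasing increments for every $j$ and $l$.

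\emph{Induction.} I would prove by induction on the iteration index $k$ that $V_k(j,l,\cdot)$ is non-decreasing (already granted by Lemma \ref{lemma1}) and has non-increasing increments in the battery level. For $k=0,1$ this is immediate, since $V_1(j,l,\cdot)=\max\{0,R_1(l)\}$ is constant in the battery level. Assuming the claim at $k-1$, the composition-and-averaging argument used in the reduction, together with $R_1$ being battery-independent, shows that $V_k^0(j,l,\cdot)$ and $V_k^1(j,l,\cdot)$ are each non-decreasing with non-increasing increments; and applying the reduction at level $k$ gives that $\Theta_k(j,l,\cdot)=V_k^1(j,l,\cdot)-V_k^0(j,l,\cdot)$ is non-decreasing. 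Hence there is a battery threshold $\tau$ (possibly trivial) with $V_k(j,l,y)=V_k^0(j,l,y)$ for $y<\tau$ and $V_k(j,l,y)=V_k^1(j,l,y)$ for $y\ge\tau$; the increments of $V_k(j,l,\cdot)$ then equal those of $V_k^0$ below $\tau$, those of $V_k^1$ above $\tau$, and equal the single ``switching'' increment $V_k^1(j,l,\tau)-V_k^0(j,l,\tau-1)$ at $y=\tau$.

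\emph{Where $\Lambda$ enters, and the main obstacle.} The only nonroutine point is that the switching increment is sandwiched between its two neighbours,
\begin{align*}
 V_k^0(j,l,\tau-1)-V_k^0(j,l,\tau-2) &\ge V_k^1(j,l,\tau)-V_k^0(j,l,\tau-1)\\
 &\ge V_k^1(j,l,\tau+1)-V_k^1(j,l,\tau).
\end{align*}
Introducing $I_k(y):=V_k^1(j,l,y)-V_k^0(j,l,y-1)$, which is exactly the per-realization integrand of $\Lambda_k$ in (\ref{Lemma2_definition_2}) before the clipping at $N_B-1$, one rewrites the left neighbour as $I_k(\tau-1)-\Theta_k(j,l,\tau-1)$ and the right neighbour as $I_k(\tau+1)-\Theta_k(j,l,\tau)$; since $\Theta_k(j,l,\tau-1)\le 0\le\Theta_k(j,l,\tau)$ by the definition of $\tau$, both inequalities reduce to $I_k$ being non-increasing, which is exactly the asserted hypothesis that $\Lambda_t$ is non-increasing for all $t<i$ (here $t=k$). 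I expect the delicate bookkeeping to be the main obstacle: the boundary regime $y+q\ge N_B$, where the $\Lambda_k$-integrand degenerates to $\Theta_k(j,l,N_B-1)$ and one must separately verify $I_k(N_B-1)\ge\Theta_k(j,l,N_B-1)$, i.e. $V_k^0(j,l,N_B-1)-V_k^0(j,l,N_B-2)\ge 0$, again from Lemma \ref{lemma1}, plus the correct tie-breaking (and the $\tau=1$ edge case) in the definition of the threshold. As a side remark, the ``same residual, same continuation'' identity used in the reduction in fact forces $I_k\equiv R_1(l)$ to be constant, so $\Lambda_k$ is automatically non-increasing and the gluing becomes trivial; retaining the $\Lambda_t$ hypothesis nonetheless keeps this lemma modular with the companion monotonicity statement for $\Lambda$. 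Once $V_{i-1}(j,l,\cdot)$ is shown concave, the reduction yields the claim.
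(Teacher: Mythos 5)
Your argument is essentially correct, but it takes a genuinely different route from the paper. The paper proves the lemma by a direct induction on the iteration index applied to $\Theta_i$ itself: it writes $V_k=V_k^0+\Delta_k^{\max}=V_k^1-\Delta_k^{\min}$ inside the continuation expectations, so that $\Theta_{k+1}=R_1-\lambda\Lambda_k+\lambda\mathbb{E}[\Delta_k^{\max}(\cdot,y-1+q)]+\lambda\mathbb{E}[\Delta_k^{\min}(\cdot,y+q)]$, and then invokes the inductive monotonicity of $\Delta_k^{\max},\Delta_k^{\min}$ together with the $\Lambda_k$ hypothesis term by term --- no concavity of the value function is ever needed. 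You instead reduce the claim to discrete concavity of $V_{i-1}$ in the battery coordinate and establish that by a separate induction with a threshold-gluing step; this is the classical supermodularity/concavity route, it is heavier (you must track two properties, monotonicity and concavity, and handle the boundary and $\tau$-edge cases you list), but it yields the stronger by-product that $V_k(j,l,\cdot)$ is concave, which the paper never asserts. Your ``same residual, same continuation'' identity $V_i^1(z,x,y)-V_i^0(z,x,y-1)=R_1(x)$ is correct and is in fact the same structural fact the paper exploits later, in the proof of Theorem 2, to verify that $\Lambda_t$ really is non-increasing.

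One point needs care. In the gluing step you assert that ``$I_k$ non-increasing \ldots is exactly the asserted hypothesis that $\Lambda_t$ is non-increasing.'' That identification is not literal: $\Lambda_t(z,x,y)$ is an average of the clipped integrand over $(j,l,q)$, and its monotonicity in $y$ does not pointwise imply that $I_t(j,l,\cdot)$ is non-increasing for each fixed $(j,l)$, which is what the sandwich inequalities at the threshold actually require. As written, that sentence is a gap. Your own side remark closes it, since $I_k\equiv R_1(l)$ by direct computation (the continuation terms cancel exactly, clipping included), so the gluing goes through --- but then your proof establishes the conclusion unconditionally and the $\Lambda_t$ hypothesis plays no logical role, which is a legitimate (stronger) proof of the stated conditional, just not one in which the hypothesis ``enters'' where you say it does. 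Also, a small imprecision: $V_1(j,l,\cdot)$ is not constant on all of $\mathcal{B}$, since $V_1(j,l,0)=0$ while $V_1(j,l,y)=R_1(l)$ for $y\ge 1$; it is still non-decreasing with non-increasing increments, so your base case survives.
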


\begin{proof}
We use induction to prove this lemma. When $i=1$, the statement is true because
$\Theta _1 \left( {z,x,y} \right) = V_1^1 \left( {z,x,y} \right) - V_1^0 \left( {z,x,y} \right) = R_1 \left( {x,y} \right)$, for $y \neq 0$, and the reward function $R_1 \left( {x,y} \right)$ keeps the same value in $y \in \mathcal{B}\backslash \{0\}$ for any given $x \in \mathcal{C}$.

Assume $i=k$ holds, the function $\Theta _{k} \left( {z,x,y} \right)$ is non-decreasing in $y \in \mathcal{B}\backslash \{0\}$, $\forall z \in \mathcal{H}$ and $\forall x \in \mathcal{C}$. It immediately implies that the following two functions are both non-decreasing in $y$:
\begin{align}
\label{Lemma2_proof_ik_max}
\Delta _k^{\max } \left( {z,x,y} \right) = \max \left\{ {0,\Theta _k \left( {z,x,y} \right)} \right\} \geq 0 \,;
\end{align}\begin{align}
\label{Lemma2_proof_ik_min}
\Delta _k^{\min } \left( {z,x,y} \right) = \min \left\{ {0,\Theta _k \left( {z,x,y} \right)} \right\} \leq 0 \,.
\end{align}
For $i=k+1$, the difference function $\Theta _{k+1} \left( {z,x,y} \right)$ can be derived from (\ref{Value_iteration}) and (\ref{simple_expected_reward_reformulation}) as follows:\begin{align}
\label{Lemma2_proof_ik+1_first}
 &\Theta _{k + 1} \left( {z,x,y} \right) = V_{k + 1}^1 \left( {z,x,y} \right) - V_{k + 1}^0 \left( {z,x,y} \right) \\
  &= R_1 \left( {x,y} \right) + \lambda \mathbb{E}_{z,x,y}\left[ {V_k \left( {j,l,\min \left( {N_B-1 ,y - 1 + q} \right)} \right)} \right] \nonumber\\
  &\;\;\;\;- R_0 \left( {x,y} \right) - \lambda \mathbb{E}_{z,x,y}\left[ {V_k \left( {j,l,\min \left( {N_B-1 ,y + q} \right)} \right)} \right] \nonumber\\
  &= R_1 \left( {x,y} \right) - R_0 \left( {x,y} \right) \nonumber\\
  &\;\;\;\;+ \lambda \mathbb{E}_{z,x,y}\big[ \max \big\{ V_k^0 \left( {j,l,\min \left\{ {N_B-1 ,y - 1 + q} \right\}} \right), \nonumber\\
  &\;\;\;\;\;\;\;\;\;\;\;\;\;\;\;\;\;\;\;\;\;\;\;\;\;\;\;\;\;\;\;V_k^1 \left( {j,l,\min \left\{ {N_B-1 ,y - 1 + q} \right\}} \right) \big\} \big] \nonumber\\
  &\;\;\;\;- \lambda \mathbb{E}_{z,x,y}\big[ \max \big\{ V_k^0 \left( {j,l,\min \left\{ {N_B-1 ,y + q} \right\}} \right),\nonumber\\
  &\;\;\;\;\;\;\;\;\;\;\;\;\;\;\;\;\;\;\;\;\;\;\;\;\;\;\;\;\;\;\;V_k^1 \left( {j,l,\min \left\{ {N_B-1 ,y + q} \right\}} \right) \big\} \big]\,. \nonumber
\end{align}
Inserting (\ref{Lemma2_proof_ik_max}) and (\ref{Lemma2_proof_ik_min}) into (\ref{Lemma2_proof_ik+1_first}) yields \begin{align}
\label{Lemma2_proof_ik+1_seond}
& \Theta _{k + 1} \left( {z,x,y} \right) \\
&  = R_1 \left( {x,y} \right) \nonumber\\
&  \;\;\; + \lambda \mathbb{E}_{z,x,y}\big[ V_k^0 \left( {j,l,\min \left\{ {N_B-1 ,y - 1 + q} \right\}} \right) \nonumber\\
& \;\;\;\;\;\;\;\;\;\;\;\;\;\;\;\;\;\;\; \;\;\;\;\;\;\; + \Delta _k^{\max } \left( {j,l,\min \left\{ {N_B-1 ,y - 1 + q} \right\}} \right) \big] \nonumber\\
&  \;\;\; - \lambda \mathbb{E}_{z,x,y}\big[ V_k^1 \left( {j,l,\min \left\{ {N_B-1 ,y + q} \right\}} \right)  \nonumber\\
& \;\;\;\;\;\;\;\;\;\;\;\;\;\;\;\;\;\;\; \;\;\;\;\;\;\; - \Delta _k^{\min } \left( {j,l,\min \left\{ {N_B-1 ,y + q} \right\}} \right) \big] \nonumber\\
&  = R_1 \left( {x,y} \right) - \lambda \Lambda _k \left( {z,x,y} \right) \nonumber\\
&  \;\;\;+ \lambda \mathbb{E}_{z,x,y}\big[ \Delta _k^{\max } \left( {j,l,\min \left\{ {N_B-1 ,y - 1 + q} \right\}} \right) \big] \nonumber\\
& \;\;\;+ \lambda \mathbb{E}_{z,x,y}\big[ {\Delta _k^{\min } \left( {j,l,\min \left\{ {N_B-1 ,y + q} \right\}} \right)} \big] \,.\nonumber
\end{align}
According to the non-decreasing property of the functions $\Delta _k^{\max } \left( {z,x,y} \right)$, $\Delta _k^{\min } \left( {z,x,y} \right)$ and $R_1 \left( {x,y} \right)$, it can be shown from (\ref{Lemma2_proof_ik+1_seond}) that $\Theta _{k + 1} \left( {z,x,y} \right)$ preserves the non-decreasing property in $y  \in \mathcal{B}\backslash \{0\}$, if $\Lambda _k \left( {z,x,y} \right)$ is non-increasing in $y \in \mathcal{B}\backslash \{0\}$, $\forall z \in \mathcal{H}$ and $\forall x \in \mathcal{C}$.
\end{proof}

In fact, the validity of the non-decreasing property of $\Theta _i \left( {z,x,y} \right)$ relies on the transition probabilities of the solar states, channel states and battery states, and this property is not necessarily satisfied in $z \in \mathcal{H}$ and $x \in \mathcal{C}$. Below we show that the function $\Lambda _t \left( {z, x,y} \right)$ is indeed non-increasing in the direction along the battery states for a given solar state and channel state, and the following theorem is provided.

\begin{theorem}
\label{theorem2}
For a given solar state $z \in \mathcal{H}$ and channel state $x \in \mathcal{C}$, the difference function $\Theta _i \left( {z,x,y} \right)$ is monotonically non-decreasing in $y \in \mathcal{B}\backslash \{0\}$, and the optimal transmission policy generated by the value iteration algorithm has a threshold structure.
\end{theorem}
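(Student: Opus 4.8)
The plan is to combine Lemma \ref{lemma2} with an induction on the iteration index $i$, where the induction hypothesis is the \emph{conjunction} of two statements: (i) $\Theta_i(z,x,y)$ is non-decreasing in $y \in \mathcal{B}\backslash\{0\}$, and (ii) $\Lambda_i(z,x,y)$ is non-increasing in $y \in \mathcal{B}\backslash\{0\}$, both holding for every $z \in \mathcal{H}$ and $x \in \mathcal{C}$. Lemma \ref{lemma2} already delivers the implication ``(ii) for all $t<i$'' $\Rightarrow$ ``(i) at $i$'', so the only genuinely new work is to close the loop by showing that (i) at levels $\le i$ forces (ii) at level $i$. Once both halves propagate, $\Theta_i$ is non-decreasing in $y$ for all $i$; letting $i \to \infty$ and invoking the convergence of value iteration (as in the proof of the first theorem) gives that $\Theta_{\pi^*}(z,x,y) = V_{\pi^*}^1(z,x,y) - V_{\pi^*}^0(z,x,y)$ is non-decreasing in $y$. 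The threshold structure is then immediate: since $a=1$ is optimal exactly when $\Theta_{\pi^*}(z,x,y) \ge 0$, and this difference crosses zero at most once as $y$ increases, there is a threshold $y^*(z,x)$ such that the optimal action is OFF for $y<y^*(z,x)$ and ON for $y \ge y^*(z,x)$.

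The key step is therefore analyzing $\Lambda_i(z,x,y)$, defined in (\ref{Lemma2_definition_2}) as an expectation over the next solar state $j$, next channel state $l$, and harvested quanta $q$ of $V_i^1(j,l,\min\{N_B-1,y+q\}) - V_i^0(j,l,\min\{N_B-1,y-1+q\})$. First I would rewrite the integrand using $V_i^a = R_a + \lambda\,\mathbb{E}[\cdots]$ and the splitting into $\Delta_i^{\max}$ and $\Delta_i^{\min}$ exactly as in (\ref{Lemma2_proof_ik+1_seond}); alternatively, I would decompose it directly as a telescoping sum in the battery variable. The point is to show that as $y$ increases by one, each term of the expectation does not increase. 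Two effects must be controlled: the shift of the battery argument inside $V_i^1$ and $V_i^0$, and the saturation at $N_B-1$. For the unsaturated regime, the quantity behaves like $V_i^1(\cdot,y+q) - V_i^0(\cdot,y-1+q)$, whose monotone behavior in $y$ is governed by a combination of Lemma \ref{lemma1} (the value functions are non-decreasing in the battery state) and the non-decreasing property of $\Theta_i$ itself (so that the ``ON advantage'' does not shrink too slowly); the saturation only helps, since once the argument is pinned at $N_B-1$ the difference stops changing. I would then average over $j$, $l$, $q$ using the fact that the solar and channel transition kernels depend on $(z,x)$ only, not on $y$, so monotonicity in $y$ of the integrand passes through the expectation verbatim.

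The main obstacle I anticipate is showing the per-term monotonicity of $V_i^1(\cdot,y+q) - V_i^0(\cdot,y-1+q)$ cleanly, because the two value functions are evaluated at \emph{different} battery arguments ($y+q$ versus $y-1+q$), so it is not a single application of Lemma \ref{lemma1}; one needs a joint statement coupling the increments of $V_i^0$ and $V_i^1$, which is precisely the content of $\Theta_i$ being non-decreasing. Concretely, $V_i^1(\cdot,y+q) - V_i^0(\cdot,y-1+q) = \bigl[V_i^1(\cdot,y+q) - V_i^0(\cdot,y+q)\bigr] + \bigl[V_i^0(\cdot,y+q) - V_i^0(\cdot,y-1+q)\bigr] = \Theta_i(\cdot,y+q) + \bigl[V_i^0(\cdot,y+q) - V_i^0(\cdot,y-1+q)\bigr]$, and I would need the first bracket's monotonicity in $y$ (induction hypothesis (i)) together with a concavity-type control on the second bracket; establishing that the battery-increment of $V_i^0$ is itself non-increasing in $y$ — a discrete concavity of the value function in the battery state — is the delicate point, and it too must be folded into the induction. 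So in practice the induction hypothesis should be strengthened to include this concavity, the transition kernel (\ref{Battery_state_transition_probability}) and the properties of $P(Q=q\mid S_H=j)$ from (\ref{Prob_no_quanta_jth_state_explict}) being what make the concavity propagate through the value-iteration update; the bookkeeping of the $\min\{N_B-1,\cdot\}$ truncation at the battery boundary is the part most likely to require careful case analysis.
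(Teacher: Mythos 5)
Your overall architecture (reduce everything to showing $\Lambda$ is non-increasing in $y$, then invoke Lemma \ref{lemma2}, then read off the threshold from the single sign change of $\Theta_{\pi^*}$) matches the paper, but the mechanism you propose for the key step has a genuine gap. The paper does not need concavity of the value function, nor does it need the induction hypothesis on $\Theta_i$ to control $\Lambda_t$. The missing observation is structural: from (\ref{simple_expected_reward_reformulation}), action $a=1$ at battery level $n+1$ and action $a=0$ at battery level $n$ induce \emph{identical} continuation distributions (both land on $\min\{N_B-1,\,n+q'\}$, and the kernels over $j$, $l$, $q'$ do not depend on the battery state), while $R_1$ is battery-independent by property (b) of Definition \ref{definition1}. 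Hence $V_t^1(j,l,n+1)-V_t^0(j,l,n)=R_1(l)$ exactly, a constant in $n$ whenever $n+1\le N_B-1$. Consequently the second difference $\Phi_y(j,l,q)$ in (\ref{theorem2_proof_new_defined_term}) vanishes identically for all $q$ except the single saturation value $q=N_B-y-1$, where it reduces to $-\bigl(V_t^0(j,l,N_B-1)-V_t^0(j,l,N_B-2)\bigr)\le 0$ by Lemma \ref{lemma1}. That is the entire proof that $\Lambda_t$ is non-increasing; no strengthened induction is required.

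Your proposed substitute is problematic on two counts. First, the discrete concavity of $V_i^0$ in the battery state is nowhere established and would itself demand a separate, delicate induction through the truncated battery dynamics. Second, even granting it, your decomposition $V_i^1(\cdot,y+q)-V_i^0(\cdot,y-1+q)=\Theta_i(\cdot,y+q)+\bigl[V_i^0(\cdot,y+q)-V_i^0(\cdot,y-1+q)\bigr]$ does not close: under your own hypothesis (i) the first summand is non-\emph{decreasing} in $y$, while you need the total to be non-increasing, so the concavity of the second summand would have to strictly dominate a term of the opposite sign --- and nothing in your plan quantifies that. The exact cancellation above is what removes this tension; without it the sign bookkeeping does not resolve.
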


\begin{proof}
We first show that $\Lambda _t \left( {z, x,y + 1} \right) - \Lambda _t \left( {z, x,y} \right) \leq 0$, for $y=1,\ldots,N_B-2$, in the following. It can be derived from the definition in Lemma \ref{lemma2} that
\begin{align}
\label{theorem2_proof_1}
 &\Lambda _t \left( {z, x,y + 1} \right) - \Lambda _t \left( {z, x,y} \right)  = \sum\limits_{j=0}^{N_H-1}{P\left( {\left. {S_H  = j} \right|S_H  = z} \right)} \nonumber \\
 & \;\;\;\;\;\;\;\;\;\;\;\;\;\;\; \cdot \sum\limits_{l = \max \left\{ {0,x - 1} \right\}}^{\min \left\{ {x + 1,N_C  - 1} \right\}} {P\left( {\left. {S_C  = l} \right|S_C  = x} \right)} \nonumber \\
 & \;\;\;\;\;\;\;\;\;\;\;\;\;\;\;  \cdot \sum\limits_{q = 0}^\infty  {P\left( {\left. {Q = q} \right|S_H  = z} \right)} \Phi_y \left( {j,l,q} \right)   \,,
\end{align}where the term $\Phi_y \left( {j,l,q} \right)$, for $y=1,\ldots,N_B-2$, is defined as
\begin{align}
\label{theorem2_proof_new_defined_term}
\Phi_y \left( {j,l,q} \right) &= V_t^1 \left( {j,l,\min \left\{ {N_B  - 1,y + 1 + q} \right\}} \right) \\
&\;\;\;- V_t^0 \left( {j,l,\min \left\{ {N_B  - 1,y + q} \right\}} \right) \nonumber\\
&\;\;\;- V_t^1 \left( {j,l,\min \left\{ {N_B  - 1,y + q} \right\}} \right) \nonumber\\
&\;\;\;+ V_t^0 \left( {j,l,\min \left\{ {N_B  - 1,y - 1 + q} \right\}} \right) \,.\nonumber
\end{align}The third summation over the variable $q$ in (\ref{theorem2_proof_1}) can be further divided into three cases, and after some straightforward manipulations, we obtain
\begin{align}
\label{theorem2_proof_new_defined_term_new_parts}
&\Phi_y \left( {j,l,q} \right) \\
&= \left\{ \begin{array}{l}
 0,\;\;q = 0, \ldots ,\left(N_B  - y - 2\right) \,; \\
  - \left( {V_t^0 \left( {j,l,N_B  - 1} \right) - V_t^0 \left( {j,l,N_B  - 2} \right)} \right) \le 0,\;\;\\
 \;\;\;\;\;\;\;\;\;\;\;\;\;\;\;\;\;\;\;\;\;\;\;\;\;\;\;\;\;\;\;\;\;\;\;\;\;\;\;\;\;\;\;\;\;\;\;\;\; q = \left(N_B  - y - 1\right) \,;\\
 0,\;\;q = \left( {N_B  - y} \right), \ldots ,\infty  \,,
 \end{array} \right. \nonumber
\end{align}
where the inequality in the second line comes from Lemma \ref{lemma1}. From (\ref{theorem2_proof_1}) and (\ref{theorem2_proof_new_defined_term_new_parts}), it leads to $\Lambda _t \left( {z,x,y + 1} \right) \leq \Lambda _t \left( {z, x,y} \right)$, and thus the function $\Lambda _t \left( {z,x,y} \right)$ is non-increasing in $y$. By applying Lemma \ref{lemma2}, it suffices to prove that $\Theta _i \left( {z,x,y} \right)$ is non-decreasing in $y \in \mathcal{B}\backslash \{0\}$. When the value iteration algorithm is converged, a threshold structure ${\boldsymbol \kappa } = \left\{ {{\boldsymbol \kappa }_0 , \ldots ,{\boldsymbol \kappa }_{N_H  - 1} } \right\}$, where ${\boldsymbol \kappa }_z  = \left\{ {\kappa _{z,0} , \ldots ,\kappa _{z,N_C  - 1} } \right\}$, is given by using the non-decreasing property of $\Theta _i \left( {z,x,y} \right)$:
\begin{align}
\label{theorem2_proof_threshold_policy_structure}
\pi ^* \left( {z, x,y} \right) = \left\{ \begin{array}{l}
 0,\;\;y \le \kappa _{z,x} \,; \\
 1,\;\;y \ge \kappa _{z,x}  + 1 \,,
 \end{array} \right.
\end{align}
for a threshold $\kappa _{z, x}$ that is satisfied with $\Theta _i \left( {z,x,\kappa _{z,x} } \right) < 0$ and $\Theta _i \left( {z,x,\kappa _{z, x}  + 1} \right) \ge 0$ if $\kappa _{z,x} \in  \mathcal{B}\backslash \{0\}$, and $\Theta _i \left( {z,x,\kappa _{z,x}  + 1} \right) \ge 0$ if $\kappa _{z, x} = 0$.
\end{proof}

By taking the solar power harvesting model in Table \ref{tab:table_em_training} as an example, a threshold structure is demonstrated in Fig. \ref{Long_term_expected_reward} for the solar state $S_H= 0$. The discount factor and the adopted modulation scheme are respectively set as $\lambda= 0.5$ and 8PSK. It appears that there exists a threshold ${\boldsymbol \kappa }_0  = \left\{ {7, 7, 0, 0, 0, 0 } \right\}$ above which data transmission occurs to gain the maximum long-term expected reward. Furthermore, it can be seen that for a fixed channel state, the long-term expected reward is non-decreasing with respect to the battery state. The simplicity of the threshold structure makes the on-off transmission policy attractive for hardware implementation, and it also helps reduce the computational burden in obtaining the optimal policy.

\begin{figure}[t]
\centering
\includegraphics[width=0.4\textwidth]{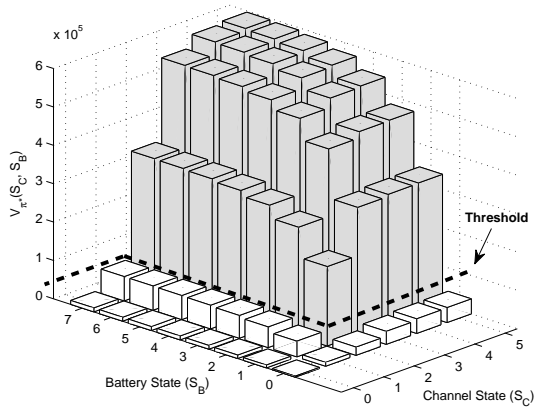}
\caption{Threshold structure policy and long-term expected reward for the solar state $S_H= 0$  ($N_C= 6$, $N_B= 8$, $R_S= 10^5$ symbols/sec, $L_S= 10^3$ symbols/packet, $T_L= 300$ sec, $P_U= 1.8 \times 10^4$ $\mu$W, $\gamma_U= 6$ dB, $\Omega_S= 0.1 $ cm$^2$, $\vartheta= 1$, $f_D= 5 \times 10^{-2}$, and ${\bf \Gamma } = \left\{ {0, 0.3, 0.6, 1.0, 2.0, 3.0, \infty } \right\}$).}
\label{Long_term_expected_reward}
\end{figure}

\subsection{Energy Deficiency Condition}

It is observed from (\ref{Prob_no_quanta}) and (\ref{Prob_no_quanta_jth_state_explict}) that the harvested energy is quantized into two consecutive energy quantum levels, $Q=0$ and $Q=1$, if the harvested power is less than the transmission power $P_U$ (i.e., the mean and variance of each solar state are sufficiently small). The energy level of $Q=0$ is referred to as energy deficiency. A necessary energy deficiency condition for the existence of an optimal threshold policy at ${\boldsymbol \kappa }  = \left\{ {{\boldsymbol \kappa }_0 , \ldots ,{\boldsymbol \kappa }_{N_H  - 1} } \right\}$ is provided in the following.

\begin{theorem} Let $V_{\pi *} \left( {z,x,y} \right)$ be the long-term expected reward of the on-off policy. Define $\Xi \left( {z,x,y} \right) = \mathbb{E}_{z,x} \left[V_{\pi *} \right.$ $\left( {j,l,\min \left\{ {N_B  - 1,y + 1} \right\}} \right) $ $\left.- V_{\pi *} \left( {j,l,\min \left\{ {N_B  - 1,y} \right\}} \right) \right]$ as a difference function of $V_{\pi *} \left( {z,x,y} \right)$ at the two battery states $\min \left\{ {N_B  - 1,y+1} \right\}$ and $\min \left\{ {N_B  - 1,y} \right\}$, which is averaged over the channel and solar state transition probabilities from the state $(z,x) \in \mathcal{H}\times \mathcal{C}$ to its adjacent states. Consider two possible energy quantum levels $Q=0$ and $Q=1$. There exists an optimal policy with the threshold ${\boldsymbol \kappa }  = \left\{ {{\boldsymbol \kappa}_0 , \ldots ,{\boldsymbol \kappa}_{N_H  - 1} } \right\}$, only if the energy deficiency probability belongs to the interval $\mathcal{D}_z= \bigcap\nolimits_{x = 0}^{N_C  - 1} {\mathcal{D}_{z,x } } $, where $\mathcal{D}_{z,x }$ is defined as
\begin{align}
\label{theorem3_polyhedron}
&\mathcal{D}_{z,x}  \\
&= \left\{ \begin{array}{l}
 P\left( {\left. {Q = 0} \right|S_H  = z} \right) \le \phi \left( {z,x,1} \right), \;\; \kappa _{z,x}  = 0 \,;\\
 P\left( {\left. {Q = 0} \right|S_H  = z} \right) \ge \phi \left( {z,x,0} \right), \;\; \kappa _{z,x}  = N_B  - 1 \,;\\
 \phi \left( {z,x,0} \right) \le P\left( {\left. {Q = 0} \right|S_H  = z} \right) \le \phi \left( {z,x,1} \right), \;\; \\
 \;\;\;\;\;\;\;\;\;\;\;\;\;\;\;\;\;\;\;\;\;\;\;\;\;\;\;\;\;\;\;\;\;\;\;\;\;\;\;\;\;\;\;\;\;\;\;\;\;\;\;\;\;\;\;\;\;\;\;\;\;{\rm  otherwise} \,,
 \end{array} \right.\nonumber
\end{align}
where $\phi \left( {z,x,n} \right) = \frac{{{{R_1\left( x \right)} \mathord{\left/
 {\vphantom {{R\left( x \right)} \lambda }} \right.
 \kern-\nulldelimiterspace} \lambda } - \Xi \left( {z,x,\kappa _{z,x}  + n} \right)}}{{\Xi \left( {z,x,\kappa _{z,x}  + n - 1} \right) - \Xi \left( {z,x,\kappa _{z,x}  + n} \right)}}$ and $R_1 \left( x \right) = R_1 \left( {x,\kappa _{z,x}  + 1} \right) = R_1 \left( {x,\kappa _{z,x} } \right)$.
\end{theorem}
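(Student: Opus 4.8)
The plan is to exploit that, once the value iteration has converged, the optimal value function $V_{\pi^*}$ and the difference function $\Theta_{\pi^*}(z,x,y)=V_{\pi^*}^1(z,x,y)-V_{\pi^*}^0(z,x,y)$ satisfy the fixed-point forms of (\ref{simple_expected_reward_reformulation})--(\ref{Value_iteration}), and that, by Theorem \ref{theorem2}, $\Theta_{\pi^*}(z,x,\cdot)$ is non-decreasing on $\mathcal{B}\backslash\{0\}$. Consequently the threshold policy of (\ref{theorem2_proof_threshold_policy_structure}) with threshold $\kappa_{z,x}$ is realized \emph{exactly} when the two boundary sign conditions hold: $\Theta_{\pi^*}(z,x,\kappa_{z,x})<0$ (whenever $\kappa_{z,x}\neq 0$) and $\Theta_{\pi^*}(z,x,\kappa_{z,x}+1)\geq 0$ (whenever $\kappa_{z,x}\neq N_B-1$). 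The strategy is to convert each of these into an explicit linear inequality in the energy-deficiency probability $P(Q=0|S_H=z)$ and then read off $\mathcal{D}_{z,x}$; since the threshold structure must hold at every channel state $x\in\mathcal{C}$ simultaneously, intersecting over $x$ yields $\mathcal{D}_z$.

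First I would specialize (\ref{simple_expected_reward_reformulation}) to the two-level harvesting case $Q\in\{0,1\}$, $P(Q=1|S_H=z)=1-P(Q=0|S_H=z)$. Using $R_0\equiv 0$, the battery-independence $R_1(x,y)\equiv R_1(x)$ of Definition \ref{definition1}(b), and the definition of $\Xi$, a short computation collapses the sum over $q$ to two terms and gives, for $y\in\mathcal{B}\backslash\{0\}$,
\begin{align}
\Theta_{\pi^*}(z,x,y)=R_1(x)-\lambda P(Q=0|S_H=z)\,\Xi(z,x,y-1)-\lambda\bigl(1-P(Q=0|S_H=z)\bigr)\,\Xi(z,x,y),
\end{align}
where $\Xi(z,x,N_B-1)=0$ since $\min\{N_B-1,N_B\}=N_B-1$. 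Dividing the two boundary conditions by $\lambda>0$ and grouping the terms carrying $P(Q=0|S_H=z)$, the condition at $y=\kappa_{z,x}$ becomes $P(Q=0|S_H=z)\bigl[\Xi(z,x,\kappa_{z,x}-1)-\Xi(z,x,\kappa_{z,x})\bigr]>R_1(x)/\lambda-\Xi(z,x,\kappa_{z,x})$, and the one at $y=\kappa_{z,x}+1$ becomes $P(Q=0|S_H=z)\bigl[\Xi(z,x,\kappa_{z,x})-\Xi(z,x,\kappa_{z,x}+1)\bigr]\leq R_1(x)/\lambda-\Xi(z,x,\kappa_{z,x}+1)$.

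To finish I must divide through by the bracketed coefficients, which forces me to pin down their sign. Here I would prove, by an induction on the value-iteration index that parallels the proof of Lemma \ref{lemma1}, that $\mathbb{E}_{z,x}[V_{\pi^*}(j,l,\cdot)]$ is concave in the battery state, equivalently that $\Xi(z,x,\cdot)$ is non-increasing: each $V_i^a(z,x,\cdot)$ is concave because it is an expectation of the concave non-decreasing map $y\mapsto V_i(j,l,\min\{N_B-1,\,y-a+q\})$, and the pointwise maximum $V_{i+1}=\max_a V_{i+1}^a$ inherits concavity thanks to the threshold structure, which makes the crossover between the two branches happen only once. In the non-degenerate case where the relevant consecutive differences of $\Xi$ are strictly positive --- precisely the situation in which $\phi(z,x,n)$ is well defined --- dividing preserves the inequality senses and yields $P(Q=0|S_H=z)\geq\phi(z,x,0)$ from the $y=\kappa_{z,x}$ condition and $P(Q=0|S_H=z)\leq\phi(z,x,1)$ from the $y=\kappa_{z,x}+1$ condition. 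Collecting the three cases $\kappa_{z,x}=0$ (only the upper bound survives, with $\Xi(z,x,1)$ in the numerator), $\kappa_{z,x}=N_B-1$ (only the lower bound survives, using $\Xi(z,x,N_B-1)=0$), and $1\leq\kappa_{z,x}\leq N_B-2$ (both bounds) reproduces $\mathcal{D}_{z,x}$, and taking $\bigcap_{x=0}^{N_C-1}\mathcal{D}_{z,x}=\mathcal{D}_z$ finishes the proof.

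The main obstacle I expect is this concavity step: Lemma \ref{lemma1} only delivers monotonicity of $V_{\pi^*}$ in the battery state, not concavity, and without the non-increasing property of $\Xi(z,x,\cdot)$ the signs of the denominators of $\phi(z,x,n)$ are undetermined, leaving the passage from the boundary sign conditions to the one-sided bounds on $P(Q=0|S_H=z)$ ambiguous; the care needed to push concavity through the $\max_a$ step (using the threshold structure) is the genuinely new ingredient. A minor further nuisance is the bookkeeping of the boundary thresholds $\kappa_{z,x}\in\{0,N_B-1\}$, where one of the two sign conditions is vacuous and the corresponding $\Xi$ term collapses.
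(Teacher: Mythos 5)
Your first half is exactly the paper's argument: you reduce the existence of the threshold $\boldsymbol{\kappa}$ to the two boundary sign conditions on $\Theta_{\pi^*}$, expand $V^1_{\pi^*}-V^0_{\pi^*}$ with the two-level harvesting distribution, and obtain the two linear inequalities in $P(Q=0|S_H=z)$ whose division by the consecutive differences of $\Xi$ yields $\phi(z,x,0)$ and $\phi(z,x,1)$; your case bookkeeping for $\kappa_{z,x}\in\{0,N_B-1\}$ and the observation $\Xi(z,x,N_B-1)=0$ are also correct. You have also correctly identified the crux: everything hinges on $\Xi(z,x,\kappa_{z,x}-1)-\Xi(z,x,\kappa_{z,x})\ge 0$ and $\Xi(z,x,\kappa_{z,x})-\Xi(z,x,\kappa_{z,x}+1)\ge 0$.

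The gap is in how you propose to establish that sign. Your concavity induction rests on the claim that $V_{i+1}=\max_a V_{i+1}^a$ inherits concavity ``because the crossover between the two branches happens only once.'' That is false as a general principle: the maximum of two concave functions that cross once need not be concave (e.g.\ $\max\{0,\,y-c\}$ is convex), and here the overtaking branch $V^1$ has the \emph{larger} increments near the crossing precisely because $\Theta$ is non-decreasing, which is the bad case for a kink. What actually saves concavity at the threshold in this problem is the structural identity $V^a(z,x,y)=R_a(x)+\lambda\,\mathbb{E}_{z,x,y}[V(j,l,\min\{N_B-1,y-a+q\})]$, i.e.\ $V^1(\cdot)$ is $V^0(\cdot-1)$ plus the constant $R_1(x)$, so the threshold inequalities $V^0(\kappa)\ge V^1(\kappa)$ and $V^1(\kappa+1)\ge V^0(\kappa+1)$ coincide with the second-difference conditions at the kink. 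This is the same shift identity that the paper exploits: rather than proving global concavity, its proof of this theorem plugs in the optimal actions dictated by the threshold structure and invokes (\ref{theorem2_proof_new_defined_term})--(\ref{theorem2_proof_new_defined_term_new_parts}) from the proof of Theorem \ref{theorem2}, where $\Phi_y(j,l,q)$ vanishes for all non-truncating $q$ and is $\le 0$ (by Lemma \ref{lemma1}) at the single truncating value $q=N_B-y-1$. So your plan reaches the right inequalities, but the justification you give for the decisive concavity step is not valid as written; you would need to either replace it with the shift-identity argument (as the paper does) or carry the identity explicitly through the $\max_a$ step of your induction.
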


\begin{proof}
By applying Theorem \ref{theorem2}, it is sufficient to show that ${\boldsymbol \kappa }$ is the optimal threshold policy, only if the following conditions are satisfied, $\forall z\in \mathcal{H}$ and $\forall x \in \mathcal{C}$:
\begin{align}
\label{theorem3_condition_1}
\left\{ \begin{array}{l}
 V_{\pi *}^1 \left( {z,x,\kappa _{z,x}  + 1} \right) \ge V_{\pi *}^0 \left( {z,x,\kappa_{z,x}  + 1} \right), \;\; \kappa _{z,x}  = 0{\rm  }\,; \\
 V_{\pi *}^1 \left( {z,x,\kappa _{z,x} } \right) \le V_{\pi *}^0 \left( {z,x,\kappa _{z,x} } \right),\;\; \kappa _{z,x}  = N_B  - 1 \,;\\
 V_{\pi *}^1 \left( {z,x,\kappa _{z,x} } \right) \le V_{\pi *}^0 \left( {z,x,\kappa _{z,x} } \right)\;\;{\rm  and }\\
 \;\;\;\;\;\;\;\;\;\; V_{\pi *}^1 \left( {z,x,\kappa _{z,x}  + 1} \right) \ge V_{\pi *}^0 \left( {z,x,\kappa _{z,x}  + 1} \right),\;\; \\
 \;\;\;\;\;\;\;\;\;\;\;\;\;\;\;\;\;\;\;\;\;\;\;\;\;\;\;\;\;\;\;\;\;\;\;\;\;\;\;\;\;\;\;\;\;\;\;\;\;\;\;\;\;\;\;\;\;\;\;\;\;\;\;\;\;{\rm  otherwise} \,.
 \end{array} \right.
\end{align}
From the definition in (\ref{simple_expected_reward_reformulation}), the condition of $V_{\pi *}^1 \left( {z,x,\kappa _{z,x} } \right) \le V_{\pi *}^0 \left( {z,x,\kappa _{z,x} } \right)$ in (\ref{theorem3_condition_1}) becomes
\begin{align}
\label{theorem3_definition_range1}
&R_1\left( x \right) \le \lambda \sum\nolimits_{q = 0}^{1} {P\left( {\left. {Q = q} \right|S_H  = z} \right) \Xi \left( {z,x,\kappa _{z,x}  - 1 + q} \right)},\;\; \nonumber \\
&\;\;\;\;\;\;\;\;\;\;\;\;\;\;\;\;\;\;\;\;\;\;\;\;\;\;\;\;\;\;\;\;\;\;\;\;\;\;\;\;\;\;\;\;\;\;\;\;\; z\in \mathcal{H} \;\;{\rm  and } \;\; x\in \mathcal{C} \,.
\end{align}
On the other hand, the condition of $V_{\pi *}^1 \left( {z,x,\kappa _{z,x}  + 1} \right) \ge V_{\pi *}^0 \left( {z,x,\kappa _{z,x}  + 1} \right)$ implies that
\begin{align}
\label{theorem3_definition_range2}
&R_1\left( x \right) \ge \lambda \sum\nolimits_{q = 0}^{1}  {P\left( {\left. {Q = q} \right|S_H  = z} \right) \Xi \left( {z,x,\kappa _{z,x}  + q} \right)},\;\; \nonumber \\
&\;\;\;\;\;\;\;\;\;\;\;\;\;\;\;\;\;\;\;\;\;\;\;\;\;\;\;\;\;\;\;\;\;\;\;\;\;\;\;\;\;\;\;\;\;\;\;\;\; z\in \mathcal{H} \;\;{\rm  and } \;\; x\in \mathcal{C}\,.
\end{align}
In addition, it can be derived that $\Xi \left( {z,x,\kappa _{z,x}  - 1} \right) - \Xi \left( {z,x,\kappa _{z,x} } \right) \ge 0$ as follows:
\begin{align}
\label{Xi_n0}
&  \Xi \left( {z,x,\kappa _{z,x}  - 1} \right) - \Xi \left( {z,x,\kappa _{z,x} } \right) \\
&  = \mathbb{E}_{z,x} \Big[ V_{\pi *}^0 \left( {j,l,\min \left\{ {N_B  - 1,\kappa _{z,x} } \right\}} \right) \nonumber\\
& \;\;\;\;\;\;\;\;\;\;\;\;\;\;\;\;\;\;\;\;\;\;\;\;\;\;\;\; - V_{\pi *}^0 \left( {j,l,\min \left\{ {N_B  - 1,\kappa _{z,x}  - 1} \right\}} \right) \Big] \nonumber \\
& - \mathbb{E}_{z,x} \Big[ V_{\pi *}^1 \left( {j,l,\min \left\{ {N_B  - 1,\kappa _{z,x}  + 1} \right\}} \right) \nonumber \\
& \;\;\;\;\;\;\;\;\;\;\;\;\;\;\;\;\;\;\;\;\;\;\;\;\;\;\;\; - V_{\pi *}^0 \left( {j,l,\min \left\{ {N_B  - 1,\kappa _{z,x} } \right\}} \right) \Big] \nonumber \\
&  \ge \mathbb{E}_{z,x} \Big[ V_{\pi *}^1 \left( {j,l,\min \left\{ {N_B  - 1,\kappa _{z,x} } \right\}} \right) \nonumber \\
& \;\;\;\;\;\;\;\;\;\;\;\;\;\;\;\;\;\;\;\;\;\;\;\;\;\;\;\;  - V_{\pi *}^0 \left( {j,l,\min \left\{ {N_B  - 1,\kappa _{z,x}  - 1} \right\}} \right) \Big] \nonumber \\
& - \mathbb{E}_{z,x} \Big[ V_{\pi *}^1 \left( {j,l,\min \left\{ {N_B  - 1,\kappa _{z,x}  + 1} \right\}} \right) \nonumber\\
& \;\;\;\;\;\;\;\;\;\;\;\;\;\;\;\;\;\;\;\;\;\;\;\;\;\;\;\;  - V_{\pi *}^0 \left( {j,l,\min \left\{ {N_B  - 1,\kappa _{z,x} } \right\}} \right) \Big] \ge 0 \,, \nonumber
\end{align}
where the last inequality holds due to (\ref{theorem2_proof_new_defined_term}) and (\ref{theorem2_proof_new_defined_term_new_parts}). Similarly, we get
\begin{align}
\label{Xi_n1}
\Xi \left( {z,x,\kappa _{z,x} } \right) - \Xi \left( {z,x,\kappa _{z,x}  + 1} \right) \ge 0 \,.
\end{align}
By applying (\ref{theorem3_definition_range1})-(\ref{Xi_n1}) into (\ref{theorem3_condition_1}) and using $P\left( {\left. {Q = 0} \right|S_H  = z} \right)$ $ + P\left( {\left. {Q = 1} \right|S_H  = z} \right)=1$, the necessary conditions can be rewritten as in (\ref{theorem3_polyhedron}). It is concluded that there exists an optimal threshold at ${\boldsymbol \kappa }  = \left\{ {{\boldsymbol \kappa }_0 , \ldots ,{\boldsymbol \kappa }_{N_H  - 1} } \right\}$, only if the probability $P\left( {\left. {Q = 0} \right|S_H  = z} \right)$ $ \in \mathcal{D}_z= \bigcap\nolimits_{x = 0}^{N_C  - 1} {\mathcal{D}_{z,x } } $.

\end{proof}

This necessary condition gives an important insight into how the energy deficiency probability affects the threshold of the policy. Taking the long-term expected reward in Fig. \ref{Long_term_expected_reward} and $S_H=0$ as an example, the energy deficiency regions versus the immediate rewards $R_1(x=2)$ for different thresholds $\kappa_{0,2}$ are plotted in Fig. \ref{Energy_Deficiency_Reward}, where the other thresholds are fixed at $\left\{ {\kappa _{0,0} ,\kappa _{0,1} ,\kappa _{0,3} ,\kappa _{0,4} ,\kappa _{0,5} } \right\} = \left\{ {7,7,0,0,0} \right\}$. It can be observed that for $R_1(x=2)= 2\times 10^4$ and $6\times 10^4$, the threshold $\kappa_{0,2}= 1 $ could be the optimal policy, only if the energy deficiency probability $P\left( {\left. {Q = 0} \right|S_H  = 0} \right)\le 0.25$ and $P\left( {\left. {Q = 0} \right|S_H  = 0} \right)\ge 0.5$, respectively.

\begin{figure}[t]
\centering
\includegraphics[width=0.4\textwidth]{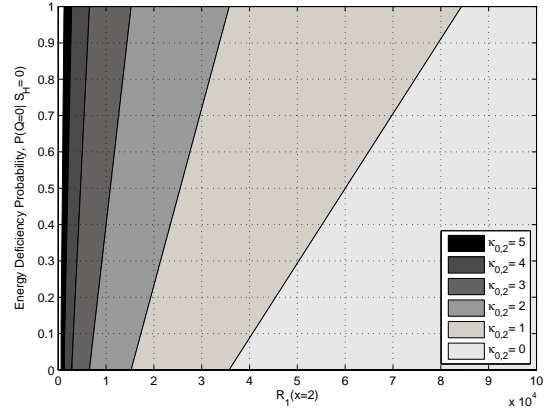}
\caption{Energy deficiency regions $P\left( {\left. {Q = 0} \right|S_H  = 0} \right)$ versus immediate rewards $R_1(x=2)$ for different thresholds $\kappa_{0,2}$.}
\label{Energy_Deficiency_Reward}
\end{figure}

\subsection{Expected Net Bit Rate Analysis}
%Here we use the net bit rate and the packet outage probability to assess the performance of
Here we use the expected net bit rate to assess the performance of the optimal threshold policy. Consider a threshold policy ${\boldsymbol \kappa } = \left\{ {{\boldsymbol \kappa }_0 , \ldots ,{\boldsymbol \kappa }_{N_H  - 1} } \right\}$, and denote $\nu_{j,i\times N_B+n}$ as the stationary probability of the state $\left( {S_H, S_C ,S_B } \right) = \left( {j,i,n} \right)$, for $i=0,\ldots,N_C-1$ and $n=0,\ldots,N_B-1$. Define ${\boldsymbol \nu }_j  = \left[ {\nu_{j,0} , \ldots , \nu_{j,i\times N_B+n}, \ldots,\nu_{j,N_C\times N_B  - 1} } \right]^T $, for $j=0,\ldots,N_H-1$, and ${\boldsymbol \nu } = \left[ {{\boldsymbol \nu }_0^T , \ldots ,{\boldsymbol \nu }_{N_H  - 1}^T } \right]^T $. Let ${{\bf \Pi}_{j,i} }$ be an $N_B \times N_B$ battery state transition probability matrix associated with the threshold policy ${\boldsymbol \kappa }$ at the $j^{th}$ solar state and the $i^{th}$ channel state, given by
\begin{align}
\label{battery_state_transition_prob_performance_analysis}
& \left[ {{\bf \Pi }_{j,i} } \right]_{p,q}  \\
& = \left\{ \begin{array}{l}
 P\left( {\left. {Q = \left( {p - q} \right)} \right|S_H  = j} \right), \\
 \;\;\;\;\;\;\;\;\;0 \le q \le \kappa _{j,i} , \;\; q \le p \le N_B  - 2 \,;\\
 0,\;\; 0 \le q \le \kappa _{j,i} ,\;\; 0 \le p \le q - 1  \,;\\
 P\left( {\left. {Q = \left( {p - q + 1} \right)} \right|S_H  = j} \right), \\
 \;\;\;\;\;\;\;\;\; \kappa _{j,i}  + 1 \le q \le N_B  - 1,\;\;q - 1 \le p \le N_B  - 2 \,; \\
 0,\;\; \kappa _{j,i}  + 1 \le q \le N_B  - 1, \;\; 0 \le p \le q - 2 \,,
 \end{array} \right. \nonumber
\end{align}
and $\left[ {{\bf \Pi }_{j,i} } \right]_{N_B  - 1,q}  = 1 - \sum\nolimits_{p = 0}^{N_B  - 2} {\left[ {{\bf \Pi }_{j,i} } \right]_{p,q} } $, for $q= 0, \ldots, N_B-1$, where the $(p,q)^{th}$ entry of the matrix $\left[ {{\bf \Pi }_{j,i} } \right]$ represents the transition probability from the state $\left( {S_H, S_C ,S_B } \right) = \left( {j,i,q} \right)$ to the state $\left( {S_H, S_C ,S_B } \right) = \left( {j,i,p} \right)$. Therefore, the stationary probability with respect to the threshold policy ${\boldsymbol \kappa }$ can be computed by solving the balance equation:
\begin{align}
\label{balance_equation_performance_analysis}
\left[ {\begin{array}{*{20}c}
   {{\bf \Phi } - {\bf I}_{\left( {N_B  \times N_C \times N_H} \right)} }  \\
   {{\bf 1}_{\left( {N_B  \times N_C \times N_H} \right)}^T }  \\
\end{array}} \right]{\boldsymbol \nu } =  \left[ \begin{array}{l}
 {\bf 0}_{\left( {N_B  \times N_C \times N_H} \right)}  \\
 1 \\
 \end{array} \right]\,,
\end{align}where ${\bf \Phi }$ is the state transition probability matrix of size $\left( {N_B  \times N_C \times N_H} \right) \times \left( {N_B  \times N_C \times N_H} \right)$, whose $(z N_C+x, j N_C+i)^{th}$ sub-matrix is equal to $P\left( {\left. {S_H  = z} \right|S_H  = j} \right) \cdot P\left( {\left. {S_C  = x} \right|S_C  = i} \right) \cdot {\bf \Pi}_{j,i}$, for $z, j=0,\ldots,N_H-1$, $i=0,\ldots,N_C-1$, and $x=\max\{0,i-1\},\ldots,\min\{i+1,N_C-1\}$, and the remaining sub-matrices all equate to zero. By taking the expectation of the reward function in (\ref{Reward_function_good_bits}), the expected net bit rate using the $2^{\chi_m}$-ary modulation scheme is given by
\begin{align}
\label{net_bit_rate_performance_analysis}
 R_{net,m} &= \frac{1}{{T_P }}\sum\limits_{j = 0}^{N_H  - 1} \sum\limits_{i = 0}^{N_C  - 1} \sum\limits_{n \ge \kappa _{j,i}  + 1}^{N_B  - 1} {\nu _{j,\left( {i \times N_B  + n} \right)}}\nonumber\\
 &\;\;\;\;\;\; \cdot { { \chi _m L_S \left( {1 - \eta \left( {i,n,1,m} \right)} \right)^{\chi _m L_S } } }    \,.
\end{align}

\begin{theorem}
\label{theorem4}
Define an energy harvesting rate as $\bar q = \mathop {\lim }\limits_{T \to \infty } \bar q_T  = \mathop {\lim }\limits_{T \to \infty } \mathbb{E}\left[ {\frac{1}{T}\sum\nolimits_{t = 1}^T {q_t } } \right]$, where $q_t$ denotes the number of energy quanta obtained by a sensor at the $t^{th}$ policy management period. The expected net bit rate of the on-off policy is upper bounded by
\begin{align}
\label{theorem4_upper_bound}
&R_{net,m}  \le \min \left\{ {\bar q,1} \right\} \cdot \Big( \frac{1}{{T_P }}\chi _m L_S \nonumber \\
&\;\;\;\;\;\;\;\;\;\;\;\;\;\;\;\;\;\; \cdot \left( {1 - \eta \left( {N_C  - 1,N_B  - 1,1,m} \right)} \right)^{\chi _m L_S }  \Big) \,.
\end{align}
At asymptotically high SNR, the upper bound value converges to $\min \left\{ {\bar q,1} \right\}\cdot \frac{1}{{T_P }}\chi _m L_S $.
\end{theorem}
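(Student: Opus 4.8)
The plan is to split the net bit rate in (\ref{net_bit_rate_performance_analysis}) into a per-slot throughput factor and the stationary probability of transmitting, bound the former by its best-case value over channel and battery states, and bound the latter by the energy harvesting rate $\bar q$ via a drift argument on the battery recursion; the high-SNR statement then follows by letting $\gamma_U\to\infty$ in the throughput factor. For the first part, I would observe that, from the closed form following (\ref{Reward_function_BER}), $\eta\left(i,n,1,m\right)$ is independent of the battery index $n$ and, by property (c) of Definition \ref{definition1}, non-increasing in the channel index $i$; hence $\left(1-\eta\left(i,n,1,m\right)\right)^{\chi_m L_S}\le\left(1-\eta\left(N_C-1,N_B-1,1,m\right)\right)^{\chi_m L_S}$ for all $i,n$. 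Substituting this into (\ref{net_bit_rate_performance_analysis}) and pulling the constant outside the summations yields
\begin{align}
\label{theorem4_proof_decompose}
R_{net,m}\le\frac{\chi_m L_S}{T_P}\left(1-\eta\left(N_C-1,N_B-1,1,m\right)\right)^{\chi_m L_S}\,P_{\rm ON}\,,
\end{align}
with $P_{\rm ON}=\sum\nolimits_{j=0}^{N_H-1}\sum\nolimits_{i=0}^{N_C-1}\sum\nolimits_{n\ge\kappa_{j,i}+1}^{N_B-1}\nu_{j,\left(i\times N_B+n\right)}$ the stationary probability that the battery lies above the threshold, i.e., that the ``ON'' action is taken.

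The core step is the inequality $P_{\rm ON}\le\min\left\{\bar q,1\right\}$. Since $P_{\rm ON}$ is a probability, $P_{\rm ON}\le 1$ is immediate. For $P_{\rm ON}\le\bar q$, let $b_t$, $a_t\in\left\{0,1\right\}$ and $q_t$ denote the battery level, the on-off action and the number of harvested energy quanta at the $t^{th}$ period. Since one transmission consumes exactly one quantum and the threshold policy never transmits from the empty battery, $b_{t+1}=\min\left\{N_B-1,\,b_t-a_t+q_t\right\}\le b_t-a_t+q_t$, so $a_t\le b_t-b_{t+1}+q_t$. Summing over $t=1,\ldots,T$ and using $b_{T+1}\ge 0$ and $b_1\le N_B-1$ gives $\sum\nolimits_{t=1}^{T}a_t\le\left(N_B-1\right)+\sum\nolimits_{t=1}^{T}q_t$. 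Taking expectations, dividing by $T$ and letting $T\to\infty$, the term $\left(N_B-1\right)/T$ vanishes, the right-hand time-average tends to $\bar q$ by its definition in the statement, and the left-hand time-average tends to $P_{\rm ON}$ by the ergodic theorem for the recurrent chain assumed for (\ref{Bellman_equation}). Hence $P_{\rm ON}\le\bar q$, and combining this with $P_{\rm ON}\le 1$ and (\ref{theorem4_proof_decompose}) yields the claimed bound. In words this is just the energy-balance inequality ``long-run consumption rate $\le$ long-run harvesting rate'', the slack being the energy lost to battery overflow.

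For the asymptotic claim, note $\gamma_U=P_U\gamma_0/N_0\to\infty$ at high SNR; in the closed form following (\ref{Reward_function_BER}), $\eta\left(N_C-1,N_B-1,1,m\right)$ carries the prefactor $\alpha_m/\left(\beta_m\gamma_U+2\right)\to 0$ while the remaining factors (the fixed normalizer $\exp\left(-\Gamma_{N_C-1}/\gamma_0\right)$ and the bracketed difference of exponentials, which lies in $\left[0,1\right]$) stay bounded, so $\eta\left(N_C-1,N_B-1,1,m\right)\to 0$ and $\left(1-\eta\left(N_C-1,N_B-1,1,m\right)\right)^{\chi_m L_S}\to 1$, giving the limiting value $\min\left\{\bar q,1\right\}\cdot\frac{1}{T_P}\chi_m L_S$. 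The delicate point I anticipate is the exchange of limit and expectation in the drift step, i.e., justifying $\frac{1}{T}\mathbb{E}\left[\sum\nolimits_{t=1}^{T}a_t\right]\to P_{\rm ON}$, which rests on the recurrence/ergodicity already postulated for the chain, together with the observation that the one-sided clipping $\min\left\{N_B-1,\cdot\right\}$ in the battery update is precisely the overflow loss and only weakens the inequality in the direction consistent with an upper bound (equality would require that no overflow occurs over the whole horizon).
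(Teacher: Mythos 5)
Your proposal is correct and follows essentially the same route as the paper's proof: bound the per-packet throughput factor by its best-case value $\left(1-\eta\left(N_C-1,N_B-1,1,m\right)\right)^{\chi_m L_S}$, and bound the long-run transmission frequency by $\min\left\{\bar q,1\right\}$ via the energy-neutrality constraint $\sum_t a_t \le \left(N_B-1\right)+\sum_t q_t$ together with the on-off constraint $\sum_t a_t \le T$. The only differences are presentational: you start from the stationary-distribution expression (\ref{net_bit_rate_performance_analysis}) and identify the time average of $a_t$ with $P_{\rm ON}$ by ergodicity, and you derive the neutrality constraint explicitly by telescoping the battery recursion, whereas the paper works directly with time averages and simply asserts (\ref{theorem4_energy_neurality_constraint}).
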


\begin{proof}
Let $a_t \in \{0, 1\}$ be the optimal action at the $t^{th}$ policy management period, corresponding to a sequence of channel states $x_t$ and battery states $y_t$, for $t=1,\ldots,T$. From (\ref{Reward_function_good_bits}), the immediate reward can be rewritten as $R_m \left( {a_t ,x_t ,y_t } \right) = a_t \frac{1}{{T_P }}\chi _m L_S \left( {1 - \eta \left( {x_t ,y_t ,1,m} \right)} \right)^{\chi _m L_S } $. Thus, the average net bit rate is calculated as
\begin{align}
\label{theorem4_average_net_bit_rate_definition}
& R_{net,m}  = \mathop {\lim }\limits_{T \to \infty } \mathbb{E}\left[ {\frac{1}{T}\sum\limits_{t = 1}^T {R_m \left( {a_t ,x_t ,y_t } \right)} } \right] \\
& = \mathop {\lim }\limits_{T \to \infty } \sum\limits_{i_t } {P\left( {x_t  = i_t ,t = 1, \ldots ,T} \right)}  \nonumber\\
& \;\;\; \cdot \frac{1}{T}\sum\limits_{t = 1}^T  \mathbb{E}\Big[ \left. {a_t \frac{1}{{T_P }}\chi _m L_S \left( {1 - \eta \left( {x_t ,y_t ,1,m} \right)} \right)^{\chi _m L_S } } \right| \nonumber \\
& \;\;\;\;\;\;\;\;\;\;\;\;\;\;\;\;\;\;\;\;\;\;\;\;\;\;\;\;\;\;\;\;\;\;\;\;\;\;\;\;\;\;\;\;\;\;\;\;\;\;\;\;\;\; x_t  = i_t ,t = 1, \ldots ,T \Big]   \nonumber\\
& \le \mathop {\lim }\limits_{T \to \infty } \sum\limits_{i_t } P\left( {x_t  = i_t ,t = 1, \ldots ,T} \right) \nonumber\\ & \;\;\; \cdot \frac{1}{T}\sum\limits_{t = 1}^T {\mathbb{E}\Big[ {\left. {a_t } \right|x_t  = i_t ,t = 1, \ldots ,T} \Big]} \nonumber\\
& \;\;\; \cdot {\left( {\frac{1}{{T_P }}\chi _m L_S \left( {1 - \eta \left( {N_C  - 1,N_B  - 1,1,m} \right)} \right)^{\chi _m L_S } } \right)}\,. \nonumber
\end{align}
For any transmission policy, the accumulated energy consumption cannot exceed the initial energy in the battery plus the total amount of harvested energy, and it yields the following constraint:
\begin{align}
\label{theorem4_energy_neurality_constraint}
\frac{1}{T}\sum\nolimits_{t = 1}^T {a_t }  \le \frac{1}{T}\left( {N_B  - 1} \right) + \frac{1}{T}\sum\nolimits_{t = 1}^T {q_t } \,.
\end{align}
Besides, the on-off transmission imposes another energy expenditure constraint:
\begin{align}
\label{theorem4_energy_onoff_constraint}
\frac{1}{T}\sum\nolimits_{t = 1}^T {a_t }  \le 1 \,.
\end{align}
By applying (\ref{theorem4_energy_neurality_constraint}) and (\ref{theorem4_energy_onoff_constraint}) into (\ref{theorem4_average_net_bit_rate_definition}), it gives
\begin{align}
\label{theorem4_average_net_bit_rate_derivation1}
& R_{net,m}  \le \min \Big\{ {\mathop {\lim }\limits_{T \to \infty } \left( {\frac{{N_B  - 1}}{T} + \bar q_T } \right),1} \Big\}\nonumber\\
& \;\;\;\cdot \left( {\frac{1}{{T_P }}\chi _m L_S \left( {1 - \eta \left( {N_C  - 1,N_B  - 1,1,m} \right)} \right)^{\chi _m L_S } } \right) \nonumber \\
&  = \min \left\{ {\bar q,1} \right\} \nonumber \\
& \;\;\;\cdot \frac{1}{{T_P }}\chi _m L_S \left( {1 - \eta \left( {N_C  - 1,N_B  - 1,1,m} \right)} \right)^{\chi _m L_S } \,.
\end{align}
Finally, it is obtained from (\ref{Reward_function_BER}) that the function ${\eta \left( {N_C  - 1,N_B  - 1,1,m} \right)} \rightarrow 0 $ as $\gamma_U \rightarrow \infty$, and the upper bound converges to $\min \left\{ {\bar q,1} \right\}\cdot \frac{1}{{T_P }}\chi _m L_S $ at asymptotically high SNR.

\end{proof}

%Furthermore, we say an outage event occurs when a packet can not be successfully decoded at %the receiver side, and the packet outage probability is given as

\section{Simulation Results}
Simulation results are presented in this section to evaluate the performance of the proposed data-driven transmission policies. In the system model, the numbers of solar states, battery states, channel states are set as four, twelve, and six, respectively. For convenience, the data record of the irradiance collected by the solar site in Elizabeth City State University in June from 2008 to 2012 is adopted throughout the simulation \cite{N.R.E.Laboratory12}. A four-state solar power harvesting model is trained using the data in 2008, 2009 and 2010, where the underlying parameters are given in Table I. The irradiance data of the subsequent two years, 2011 and 2012, are then applied for performance evaluation. Other simulation parameters are listed in Table II. The channel quantization levels are defined as ${\bf \Gamma } = \left\{ {0, 0.3, 0.6, 1.0, 2.0, 3.0, \infty } \right\}$, and the channel gains are generated by Jakes' model with the normalized Doppler frequency $f_D= 0.05$ \cite{W.C.Jakes74}. In the system configuration, each packet contains $L_S= 10^3$ data symbols, and the symbol rate $R_S$ is operated at $100$ kHz. In other words, the packet duration $T_P$ is given by $0.01$ sec. The modulation types could be QPSK, 8PSK and 16QAM, and the basic transmission power level is chosen as $P_U= 40\times 10^3$ $\mu$W. These three modulation types are considered as potential candidates for the composite policy, while only one modulation type is preselected for the on-off policy. The transmission actions are changed every five minutes, i.e., $T_L= 300$ sec. Different actions are followed by different modulation and power choices, resulting in different bit rate performance. In the value iteration algorithm, the discount factor $\lambda$ and the stopping criterion $\varepsilon$ are selected as $0.99$ and $10^{-6}$, respectively. The solar panel area is assumed to be $1$ cm$^2$, $4$ cm$^2$ and $8$ cm$^2$, and the energy conversion efficiency is set as $\vartheta= 20 \%$ \cite{S.Sudevalayam11}. We assume that the battery state is randomly initialized. The above parameters are used as default settings, except as otherwise stated. Finally, a normalized SNR $\gamma_C$ is defined with respect to the transmission power of $10^3$ $\mu$W throughout the simulation.

\begin{table}[t]
\renewcommand{\arraystretch}{1.2}\tabcolsep=0.6ex
\caption{Simulation parameters}
\label{Simulation_Parameters} \centering
\begin{tabular}{cc}
\hline\hline
Symbol rate ($R_S$) & $100$ kHz \\
\hline
Packet size ($L_S$) & $10^3$ symbols \\
\hline
\multirow{3}{*}{Modulation type ($\alpha_m, \beta_m$)} & QPSK: ($1,2$)\\
 & 8PSK: ($\frac{2}{3}, 2\sin ^2 \left( {\frac{\pi }{8}} \right)$)\\
 & 16QAM: ($\frac{3}{4}, \frac{3}{15}$) \\
\hline
Policy management duration ($T_L$) & $300$ sec \\
\hline
Basic action power ($P_U$) & $40 \times 10^3$ $\mu$W \\
\hline
Solar panel area ($\Omega_S$) & $1$, $4$, and $8$ cm$^2$ \\
\hline
Energy conversion efficiency ($\vartheta$) &  $20 \%$ \\
\hline
Channel quantization levels ($ \bf \Gamma $) & $\left\{ {0, 0.3, 0.6, 1.0, 2.0, 3.0, \infty } \right\}$ \\
\hline
Channel Model & Jakes' model \\
\hline
Normalized Doppler frequency ($f_D$) &  $0.05$ and $0.005$ \\
\hline
Discount factor ($\lambda$) &  $0.99$ \\
\hline
\hline
\end{tabular}
\label{System_parameter}
\end{table}

As a benchmark, two myopic policies are included for performance comparisons. For these two policies, the actions are performed without concern for the channel state and battery state transition probabilities, and data packets are transmitted as long as the battery storage is non-empty. The first policy (Myopic Policy I) attempts to transmit data packets at the lowest transmission power level, if the energy storage is positive. Regarding with the second one (Myopic Policy II), the largest available battery power is consumed for data transmission, if the battery state is non-zero. In addition, we compare the proposed schemes with a deterministic energy harvesting scheme in \cite{M.Gorlatova13}, called $t$-time fair rate assignment ($t$-TFR), which requires perfect knowledge of the channel fading and energy harvesting patterns for determining the optimal transmission power over a short-term period $t$ in order to maximize the reward function in (\ref{Reward_function_good_bits}).

Fig. \ref{New_Simu_SNR_Net_Bit_Sollarcell} shows the expected net bit rates for the composite and on-off transmission policies. The solar panel area is set as $\Omega_S= 1$ cm$^2$. The expected net bit rate of the on-off policy is calculated according to (\ref{net_bit_rate_performance_analysis}), while that for the composite policy can be analyzed in a similar way although the accessible transmission actions appear to be more sophisticated. The performance upper bound of the on-off policy in (\ref{theorem4_upper_bound}) is also included for calibration purposes. For the on-off policy, it is observed that the expected net bit rate is monotonically increased with the operating SNRs, while the performance finally becomes saturated at $0.6 \times 10^5$ bits/sec, $0.9 \times 10^5$ bits/sec and $1.2 \times 10^5$ bits/sec for QPSK, 8PSK and 16QAM, respectively, when $\gamma_C$ is sufficiently high. It is clear that the policy with QPSK modulation exhibits a better bit rate, as compared to 8PSK and 16QAM modulation when $\gamma_C \le 2$ dB. On the contrary, it is advisable to employ high-level modulation schemes, e.g., 8PSK and 16QAM, to achieve better performance. This is because the adoption of high-level modulation schemes generally requires larger SNRs in order to guarantee a low packet error rate. As expected, the composite policy offers an expected net bit rate better than the on-off policy, and the performance gap between these two policies could be as large as $60 \times 10^3$ bits/sec. However, the on-off policy with a mixture of QPSK and 16QAM modulation can still achieve a large fraction of bit rate regions as available in the composite policy, and its simple implementation makes it attractive for practical applications.

Fig. \ref{New_Simu_Observe_Net_Bit_Rate_Composite_Policies} shows the average net bit rates of the proposed composite policy and other benchmark schemes, in which the real data record from 2011 and 2012 is utilized to assess the performance. Here, the solar panel area is set to be $\Omega_S= 1$ cm$^2$. We can observe from this figure that Myopic Policy I with QPSK is superior to Myopic Policy II with 16QAM in terms of the average net bit rates for low SNR regions, whereas the reverse trend is found for high SNR regions. This is because aggressive energy expenditure merits better bit rate performance when the operating SNR is high, and conservative use of energy is more preferable at low SNRs. In agreement with the theoretical results in Fig. \ref{New_Simu_SNR_Net_Bit_Sollarcell}, the composite transmission policy is capable of achieving much better average net bit rates than these two myopic policies when real data measurement is used. We can also find that the average net bit rate of the composite policy is superior to that of the $t$-TFR scheme, even if the energy harvesting and channel variation patterns are assumed to be perfectly predicted for one or two hours. Though the $t$-TFR scheme could attain better performance with an increased prediction interval, it suffers from the problems of larger prediction error and higher computational complexity for a long prediction interval.

The average net bit rate of the on-off transmission policy is shown in Fig. \ref{New_Simu_Observe_Net_Bit_Rate_Policies_ONOFF} for different modulation types, where the solar panel area is set as $\Omega_S= 1$ cm$^2$. Moreover, the performances of the Myopic Policy I and the Two Hour-TFR schemes, in conjunction with various modulation types, are included in this figure. In order to make a fair comparison, the $t$-TFR scheme also adopts on-off power actions for the short-term scheduling of energy expenditure. It can be seen that the maximum spectrum efficiency provided by our proposed on-off policy is approximately given by $0.6$ bits/sec/Hz and $1.2$ bits/sec/Hz for QPSK and 16QAM, respectively. With a fixed modulation scheme, the on-off policy offers significant performance gains over the myopic policy by taking advantage of channel diversity gains. A closer look at this figure reveals that the performance gap between these two policies becomes wider as the modulation level increases. When compared with the Two Hour-TFR scheme, the on-off policy can still achieve better average net bit rates, no matter which modulation type is used.

Fig. \ref{New_Simu_Observe_Net_Bit_Rate_Battery_State} illustrates the average net bit rate of the composite policy as a function of the number of battery states. To clearly understand the relationship between the Doppler frequency and the battery storage capacity, the normalized Doppler frequency, $f_D$, is chosen as $0.005$ and $0.05$. We can observe that the average net bit rate can be dramatically enhanced by enlarging the energy buffer size to store more energy quanta, especially when the operating SNR is low. For instance, the performance with $N_B= 16$ at $\gamma_C= 0$ dB, $\Omega_S= 8$ cm$^2$ and $f_D= 0.05$ is about $2.5 \times 10^{5}$ bits/sec, probably $1.5$ times that being achieved by the same policy with $N_B= 2$. Obviously, the energy harvesting sensor node additionally benefits from channel diversity gains if the energy spending is carefully governed to respond to the change in channel conditions. Furthermore, the bit rate becomes better as the Doppler frequency and the solar panel area increase, and the improvement owing to the increase in the number of battery states is relatively modest for lower Doppler frequencies.

\section{Conclusions}

In this paper, we have studied the problem of maximizing long-term net bit rates in sensor communication that solely relies on solar energy for data transmission. A node-specific energy harvesting model was developed to classify the harvesting conditions into several solar states with different energy quantum arrivals. Unlike previous works, which were not concerned with the real-world energy harvesting capability, a data-driven MDP framework was formulated to obtain the optimal transmission parameters from a set of power and modulation actions in response to the dynamics of channel fading and battery storage. Since different nodes may possess different energy harvesting capabilities, the parameters of the underlying energy harvesting process were completely determined by the solar irradiance observed at a sensor node. In practice, the exact solar state at each time epoch is unavailable, and a mixed strategy was proposed to associate the adaptive transmission parameters with the beliefs of the solar states. The validity of the proposed data-driven approach was rigorously justified by the real data of solar irradiance. We also analyzed the properties and the net bit rates of the optimal on-off transmission policy, and it was proved that this policy has an inherent threshold structure in the direction along the battery states. Through extensive computer simulations, the proposed data-driven approach was shown to achieve significant gains with respect to other radical approaches, while it did not require non-causal knowledge of energy harvesting and channel fading patterns.

\begin{figure}[t]
\centering
\includegraphics[width=0.4\textwidth]{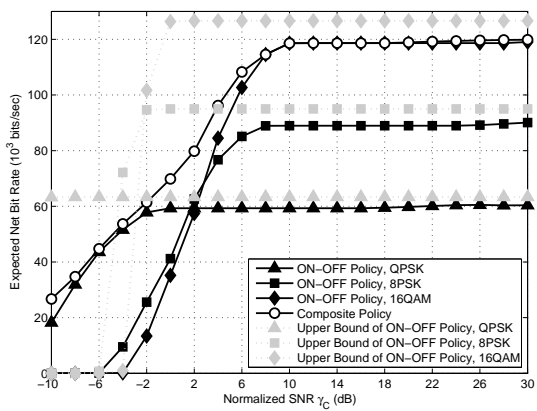}
\caption{Expected net bit rate versus normalized SNR $\gamma_C$ for different transmission policies ($\Omega_S= 1$ cm$^2$, and $f_D= 0.05$).}
\label{New_Simu_SNR_Net_Bit_Sollarcell}
\end{figure}

\begin{figure}[t]
\centering
\includegraphics[width=0.4\textwidth]{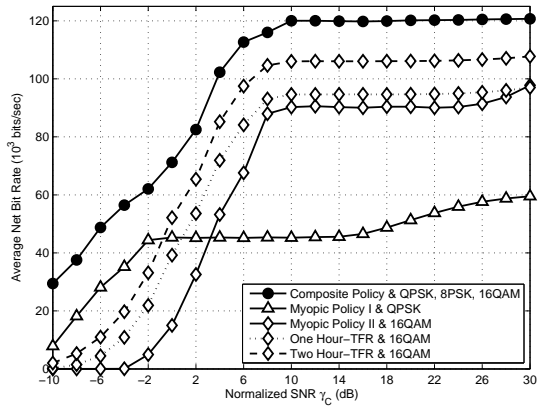}
\caption{Average net bit rate performances of the composite policy, Myopic Policy I, Myopic Policy II and $t$-TFR with the real data record of irradiance in June from 2011 to 2012, measured by a solar site in Elizabeth City State University ($\Omega_S= 1$ cm$^2$, and $f_D= 0.05$).}
\label{New_Simu_Observe_Net_Bit_Rate_Composite_Policies}
\end{figure}

\begin{figure}[t]
\centering
\includegraphics[width=0.4\textwidth]{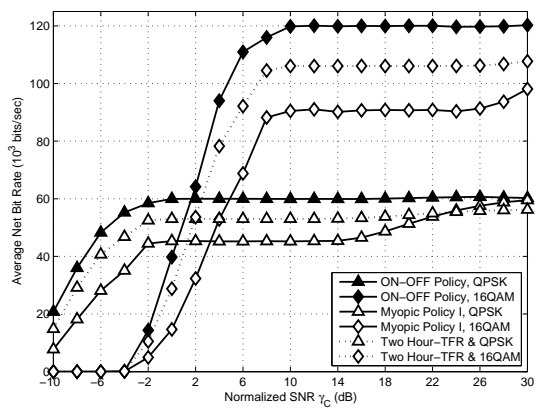}
\caption{Average net bit rate performances of the on-off and other benchmark policies ($\Omega_S= 1$ cm$^2$, and $f_D= 0.05$).}
\label{New_Simu_Observe_Net_Bit_Rate_Policies_ONOFF}
\end{figure}

\begin{figure}[t]
\centering
\includegraphics[width=0.4\textwidth]{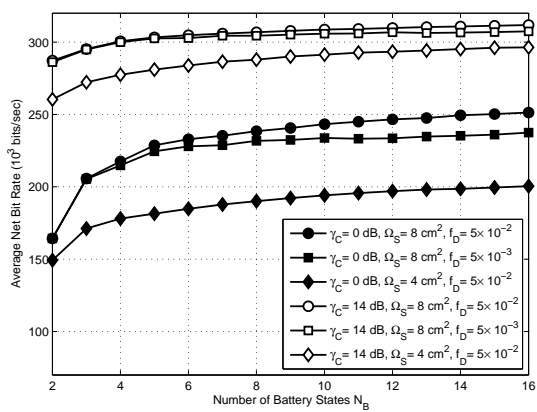}
\caption{Average net bit rate of the composite policy versus number of battery states under different Doppler frequencies and solar panel areas.}
\label{New_Simu_Observe_Net_Bit_Rate_Battery_State}
\end{figure}


\begin{thebibliography}{1}

\bibitem{C.Pandana05} C. Pandana and K. J. R. Liu,
                ``Near optimal reinforcement learning framework for energy-aware wireless sensor communications, "
                {\it IEEE J. Sel. Areas Commun.}, vol. 23, no. 4, pp. 788-797, Apr. 2005.

\bibitem{S.Sudevalayam11} S. Sudevalayam and P. Kulkarni,
                ``Energy harvesting sensor nodes: survey and implications, "
                {\it IEEE Commun. Surveys Tutorials}, vol. 13, no. 3, pp. 443-461, Third Quad. 2011.

\bibitem{A.Kansal07} A. Kansal, J. Hsu, S. Zahedi, and M. B. Srivastava,
                ``Power management in energy harvesting sensor networks, "
                {\it ACM Trans. Embedded Comput. Syst.}, vol. 6, no. 4, pp. 32/1-38, Sept. 2007.

\bibitem{M.Tacca07} M. Tacca, P. Monti, and A. Fumagalli,
                ``Cooperative and reliable ARQ protocols for energy harvesting wireless sensor nodes, "
                {\it IEEE Trans. Wireless Commun.}, vol. 6, no. 7, pp. 2519-2529, July 2007.

\bibitem{S.Reddy10} S. Reddy and C. R. Murthy,
                ``Profile-based load scheduling in wireless energy harvesting sensors for data rate maximization, "
                {\it Proc. IEEE Int. Conf. Commun.}, pp. 1-5, 2010.

\bibitem{D.Niyato07} D. Niyato, E. Hossain, and A. Fallahi,
                ``Sleep and wakeup strategies in solar-powered wireless sensor/mesh networks: performance analysis and optimization, "
                {\it IEEE Trans. Mobile Comput.}, vol. 6, no. 2, pp. 221-236, Feb. 2007.

\bibitem{B.Medepally09} B. Medepally, N. B. Mehta, and C. R. Murthy,
                ``Implications of energy profile and storage on energy harvesting sensor link performance, "
                {\it Proc. IEEE Glob. Commun. Conf.}, pp. 1-6, 2009.

\bibitem{N.Michelusi12} N. Michelusi, K. Stamatiou, and M. Zorzi,
                ``On optimal transmission policies for energy harvesting devices, "
                {\it Proc. IEEE Inf. Theory and App. Workshop}, pp. 249-254, 2012.

\bibitem{N.Michelusi131} N. Michelusi and M. Zorzi,
                ``Optimal random multiaccess in energy harvesting wireless sensor networks, "
                {\it Proc. IEEE Int. Conf. Commun.}, pp. 463-468, 2013.

\bibitem{A.Aprem13} A. Aprem, C. R. Murthy, and N. B. Mehta,
                ``Transmit power control policies for energy harvesting sensors with retransmissions, "
                {\it IEEE J. Sel. Topics Signal Process.}, vol. 7, no. 5, pp. 895-906, Oct. 2013.

\bibitem{K.J.Prabuchandran13} K. J. Prabuchandran, S. K. Meena, and S. Bhatnagar,
                ``Q-learning based energy management policies for a single sensor node with finite buffer, "
                {\it IEEE Wireless Commun. Lett.}, vol. 2, no. 1, pp. 82-85, Feb. 2013.

\bibitem{J.Lei09} J. Lei, R. Yates, and L. Greenstein,
                ``A generic model for optimizing single-hop transmission policy of replenishable sensors, "
                {\it IEEE Trans. Wireless Commun.}, vol. 8, no. 2, pp. 547-551, Feb. 2009.

\bibitem{S.Mao12} S. Mao, M. H. Cheung, and V. W. S. Wong,
                ``An optimal energy allocation algorithm for energy harvesting wireless sensor networks, "
                {\it Proc. IEEE Int. Conf. Commun.}, pp. 265-270, 2012.

\bibitem{M.Kashef12} M. Kashef and A. Ephremides,
                ``Optimal packet scheduling for energy harvesting sources on time varying wireless channels, "
                {\it J. Commun. and Networks}, vol. 14, no. 2, pp. 121-129, Apr. 2012.

\bibitem{Z.Wang12} Z. Wang, A. Tajer, and X. Wang,
                ``Communication of energy harvesting tags, "
                {\it IEEE Trans. Commun.}, vol. 60, no. 4, pp. 1159-1166, Apr. 2012.

\bibitem{H.Li10} H. Li, N. Jaggi, and B. Sikdar,
                ``Cooperative relay scheduling under partial state information in energy harvesting sensor networks, "
                {\it Proc. IEEE Glob. Commun. Conf.}, pp. 1-5, 2010.

\bibitem{N.Michelusi13} N. Michelusi, K. Stamatiou, and M. Zorzi,
                ``Transmission policies for energy harvesting sensors with time-correlated energy supply, "
                {\it IEEE Trans. Commun.}, vol. 61, no. 7, pp. 2988-3001, July 2013.

\bibitem{S.Yin13} S. Yin, E. Zhang, J. Li, L. Yin, and S. Li
                ``Throughput optimization for self-powered wireless communications with variable energy harvesting rate, "
                {\it Proc. IEEE Wireless Commun. and Networking Conf.}, pp. 830-835, 2013.

\bibitem{P.S.Khairnar11} P. S. Khairnar and N. B. Mehta,
                ``Power and discrete rate adaptation for energy harvesting wireless nodes, "
                {\it Proc. IEEE Int. Conf. Commun.}, pp. 1-5, 2011.

\bibitem{O.Ozel11} O. Ozel, K. Tutuncuoglu, J. Yang, S. Ulukus, and A. Yener,
                ``Transmission with energy harvesting nodes in fading wireless channels: optimal policies, "
                {\it IEEE J. Sel. Areas Commun.}, vol. 29, no. 8, pp. 1732-1743, Sept. 2011.

\bibitem{N.Roseveare14} N. Roseveare and B. Natarajan,
                ``An alternative perspective on utility maximization in energy-harvesting wireless sensor networks, "
                {\it IEEE Trans. Veh. Technol.}, vol. 63, no. 1, pp. 344-356, Jan. 2014.

\bibitem{T.Zhang13} T. Zhang, W. Chen, Z. Han, and Z. Cao,
                ``A cross-layer perspective on energy harvesting aided green communications over fading channels, "
                {\it in Proc. IEEE INFOCOM.}, pp. 3225-3230, 2013.

\bibitem{A.Seyedi10} A. Seyedi and B. Sikdar,
                ``Energy efficient transmission strategies for body sensor networks with energy harvesting, "
                {\it IEEE Trans. Commun.}, vol. 58, no. 7, pp. 2116-2126, July 2010.

\bibitem{S.Zhang13} S. Zhang, A. Seyedi, and B. Sikdar,
                ``An analytical approach to the design of energy harvesting wireless sensor nodes, "
                {\it IEEE Trans. Wireless Commun.}, vol. 12, no. 8, pp. 4010-4024, Aug. 2013.

\bibitem{M.Gorlatova13} M. Gorlatova, A. Wallwater, and G. Zussman,
                ``Networking low-power energy harvesting devices: measurements and algorithms, "
                {\it IEEE Trans. Mobile Comput.}, vol. 12, no. 9, pp. 1853-1865, Sept. 2013.

\bibitem{Q.Wang13} Q. Wang and M. Liu,
                ``When simplicity meets optimality: efficient transmission power control with stochastic energy harvesting, "
                {\it Proc. IEEE INFOCOM.}, pp. 580-584, 2013.

\bibitem{N.R.E.Laboratory12} N. R. E. Laboratory. (2012, Feb.)
                Solar radiation resource information. [Online]. Available: http://www.nrel.gov/rredc/.

\bibitem{J.A.Bilmes98} J. A. Bilmes,
                ``A gentle tutorial of the EM algorithm and its application to parameter estimation for Gaussian mixture and hidden Markov models, "
                International Computer Science Institute, Tech. Rep. ICSI-TR-97-021, Apr. 1998.

\bibitem{H.S.Wang95} H. S. Wang and N. Moayeri,
                ``Finite-state Markov channel-a useful model for radio communication channels, "
                {\it IEEE Trans. Veh. Technol.}, vol. 44, no. 1, pp. 163-171, Feb. 1995.

\bibitem{W.C.Jakes74} W. C. Jakes, {\it Microwave Mobile Communications}. New York: Wiley, 1974.


\end{thebibliography}
\end{document}